\def\ie{{i.e.\xspace}}
\def\eg{{e.g.\xspace}}
\def\etc{etc.\xspace}
\newlength{\docparskip}
\setlist[enumerate]{
    nosep,
    topsep=2pt,
    partopsep=4pt,
    itemsep=2pt,
    leftmargin=1.5em,
    %before*={\parskip=4pt},
    %after*={\vspace{-\docparskip+4pt}}
}
\setlist[itemize]{
    nosep,
    topsep=2pt,
    partopsep=4pt,
    itemsep=2pt,
    leftmargin=1.5em,
    %before*={\parskip=4pt},
    %after*={\vspace{-\docparskip+4pt}}
}
\setlist[description]{
    nosep,
    topsep=2pt,
    partopsep=4pt,
    itemsep=2pt,
    %leftmargin=1em,
    %before*={\parskip=4pt},
    %after*={\vspace{-\docparskip+4pt}}
}
\newlist{inlineenum}{enumerate*}{1}
\setlist[inlineenum]{label=\itshape\alph*\upshape)}
\def\name{{Venn}\xspace}
\def\fl{{CL}\xspace}
\newlength{\badgewidth}
\newlength{\badgegap}
\newcommand{\badgeList}{}
\NewDocumentCommand{\addTopRightBadge}{O{} m}{%
\gappto{\badgeList}{\href{#1}{\includegraphics[width=\badgewidth]{#2}}\hspace{\badgegap}}%
}
\newcommand{\placeTopRightBadges}{%
\AddToShipoutPictureBG*{%
\put(\LenToUnit{\paperwidth - 1.5cm - \badgewidth},\LenToUnit{\paperheight - 2cm}){%
\makebox[0pt][r]{\badgeList}%
}%
}%
}
\begin{document}
% \acmBadgeR{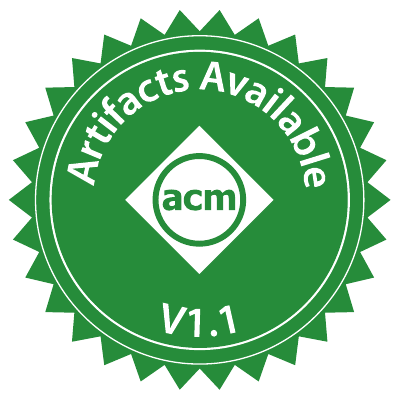}

\twocolumn[
\mlsystitle{{\name}: Resource Management for Collaborative Learning Jobs}

% It is OKAY to include author information, even for blind
% submissions: the style file will automatically remove it for you
% unless you've provided the [accepted] option to the mlsys2021
% package.

% List of affiliations: The first argument should be a (short)
% identifier you will use later to specify author affiliations
% Academic affiliations should list Department, University, City, Region, Country
% Industry affiliations should list Company, City, Region, Country

% You can specify symbols, otherwise they are numbered in order.
% Ideally, you should not use this facility. Affiliations will be numbered
% in order of appearance and this is the preferred way.
%\mlsyssetsymbol{equal}{*}
%
\begin{mlsysauthorlist}
\mlsysauthor{Jiachen Liu}{umich}
\mlsysauthor{Fan Lai}{uiuc}
\mlsysauthor{Ding Ding}{umich}
\mlsysauthor{Yiwen Zhang}{umich}
\mlsysauthor{Mosharaf Chowdhury}{umich}
%\mlsysauthor{Paper \#?, \pageref{EndOfPaper} Pages}
\end{mlsysauthorlist}
 
\mlsysaffiliation{umich}{  Computer Science and Engineering,
University of Michigan, Michigan, USA}
\mlsysaffiliation{uiuc}{ Computer Science and Engineering,
University of Illinois at Urbana-Champaign, Illinois, USA}

\mlsyscorrespondingauthor{Jiachen Liu}{amberljc@umich.edu}

% the "keywords" metadata in the PDF but will not be shown in the document
%\mlsyskeywords{Machine Learning, MLSys}
%\printAffiliationsAndNotice{\mlsysEqualContribution} % otherwise use the standard text.
\begin{abstract}
In recent years, collaborative learning (\fl) has emerged as a promising approach for machine learning (ML) and data science across distributed edge devices. 
As the deployment of CL jobs increases, they inevitably contend for limited resources.
However, efficient resource scheduling in this context is challenging because of the \emph{ephemeral nature and resource heterogeneity of devices}, coupled with the \emph{overlapping resource requirements of diverse CL jobs}.
Existing resource managers often assign devices to CL jobs randomly for simplicity and scalability, but this approach compromises job efficiency.

In this paper, we present \name, a \fl resource manager that efficiently schedules ephemeral, heterogeneous devices among multiple \fl jobs to reduce the average job completion time (JCT).
\name formulates the \emph{Intersection Resource Scheduling (IRS)} problem to identify complex resource contention among multiple \fl jobs.
It then proposes a contention-aware scheduling heuristic to minimize the average scheduling delay. 
Furthermore, it proposes a resource-aware device-to-job matching heuristic to optimize response collection time by mitigating stragglers.
Our evaluation shows that, compared to the state-of-the-art CL resource managers, \name improves the average JCT by up to $1.88\times$.
The code is available at \url{https://github.com/SymbioticLab/Venn}.

\end{abstract}
]

% this must go after the closing bracket ] following \twocolumn[ ...

% This command actually creates the footnote in the first column
% listing the affiliations and the copyright notice.
% The command takes one argument, which is text to display at the start of the footnote.
% The \mlsysEqualContribution command is standard text for equal contribution.
% Remove it (just {}) if you do not need this facility.

\printAffiliationsAndNotice{}  % leave blank if no need to mention equal contribution
%\printAffiliationsAndNotice{\mlsysEqualContribution} % otherwise use the standard text.

\section{Introduction}%
\label{sec:intro} 
 
Collaborative learning (\fl) enables distributed edge devices to perform collaborative machine learning (ML) without moving raw data into the cloud~\cite{google, fa}. 
\fl has been adopted by many large corporations including Apple, Meta, Google, and LinkedIn to protect user data privacy while improving user experience. 
For example, Google adopts \fl for a wide range of applications such as speech recognition~\cite{google-speech}, healthcare study~\cite{google-health}, next-word prediction~\cite{google-keyboard, gboard-fl}, emoji prediction~\cite{google-emoji}, and query suggestion on keyboard~\cite{google-query-sugg}. 
Each \fl training job in practice often requires 1000$\sim$10000 device participants in each training round and takes 4$\sim$8 days to finish~\cite{google-query-sugg}.
As the number of \fl applications continues to grow, efficient edge resource management has become the key to fast and resource-efficient \fl. 

Unlike cloud resources, \fl resources are not accessible all the time; they only become available for training when they are connected to WiFi and charging~\cite{google}.
Moreover, they are highly heterogeneous in terms of hardware capacity, software versions, and training data availability (\S\ref{sec:fl}).
Even when running on the same client population, different \fl jobs often compete for different subsets of devices, based on their specific model characteristics and training objectives. 
The resource heterogeneity and job's resource requirement diversity lead to complex multi-resource contentions among ongoing \fl jobs, where eligible resources for each job are not only limited but also may \emph{overlap, contain, or be within} those of one or more other jobs.

\begin{figure}[t!]
   \centering 
      \includegraphics[trim=0 140 30 0,clip,scale=1.1]{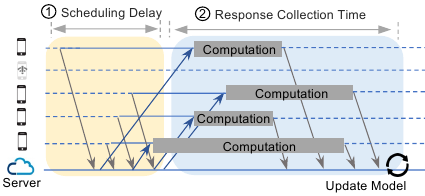} 
     \caption{Composition of the completion time of one round of a \fl job.} 
   \label{fig:jct}    
\end{figure}

However, existing \fl efforts often overlook resource contention among coexisting \fl jobs, assuming that sufficient resources are always available~\cite{google}.
For example, many recent works~\cite{fedscale, relay, oort, divfl} primarily focus on optimizing the response collection time of individual \fl jobs.
In practice, however, multiple \fl jobs may run concurrently and compete for a subset of devices, leading to resource contention~\cite{google}.
As a result, existing works fail to account for scheduling delay (\S\ref{sec:flsys})---the time needed to acquire all necessary resources for a \fl job---as a key factor influencing job completion time (JCT) (Figure~\ref{fig:jct}).

With the proliferation of \fl in production, large corporations such as Apple~\cite{apple}, Meta~\cite{meta}, and Google~\cite{google} have developed their own \fl infrastructures to manage multiple jobs at scale.
However, despite their low-level differences, these \fl resource managers can largely be characterized as \emph{random device-to-job matching} in various forms.
We observe that such random matching can result in higher scheduling delays and longer response collection times as an increasing number of \fl jobs compete for a shared pool of devices.

In this paper, we present \name, a \fl resource manager designed to minimize the average JCT of multiple \fl jobs competing for a large pool of edge resources.
First, \name models the complex resource contention among \fl jobs as an instance of the \emph{Intersection Resource Scheduling (IRS)} problem (§\ref{sec:irs}), where a job's resource demands may overlap with, contain, or fall within those of other jobs.
\name then introduces a contention-aware scheduling heuristic that prioritizes jobs requiring scarce resources and lower overall demand to minimize the average scheduling delay.
Second, \name employs a resource-aware device-to-job matching heuristic to mitigate the effect of stragglers on response collection time (\S\ref{sec:match}).
Together, these techniques jointly optimize the average JCT of multiple \fl jobs operating under dynamically changing and uncertain resource conditions.

We have implemented and evaluated \name across various \fl workloads derived from real-world scenarios (\S\ref{sec:eval}). 
Compared to state-of-the-art \fl resource allocation solutions \cite{google, meta, apple}, \name improves the average JCT by up to $1.88\times$.

Overall, we make the following contributions in this paper:
\begin{enumerate}
   \item We introduce \name, a \fl resource manager designed to enable efficient sharing of heterogeneous devices across a large number of \fl jobs.
   
   \item To minimize the average JCT of \fl jobs, we propose a job scheduling and client matching joint solution to optimize both the scheduling delay and response collection time.
   Our approach achieves a logarithmic-linear time complexity and is supported by theoretical analysis.

   \item We have implemented and evaluated \name, along with its scheduling and matching algorithms, demonstrating improvements in the average JCT compared to the state-of-the-art across various real-world \fl workloads.
\end{enumerate}
 
\section{Background and Motivation}
\label{sec:background} 

\subsection{Collaborative Learning}
\label{sec:fl}
Collaborative learning (\fl) has been widely adopted in industry to train ML models on diverse datasets stored on edge devices without transferring raw data to the cloud.
Example applications include healthcare study, speech recognition, next-word prediction, \etc
However, \fl introduces unique resource management challenges due to the distinct characteristics of edge devices and the jobs that rely on them.

\begin{figure}[t!]
  \centering
     \begin{subfigure}{1.5in}  
   \centering 
      \includegraphics[ scale=0.295]{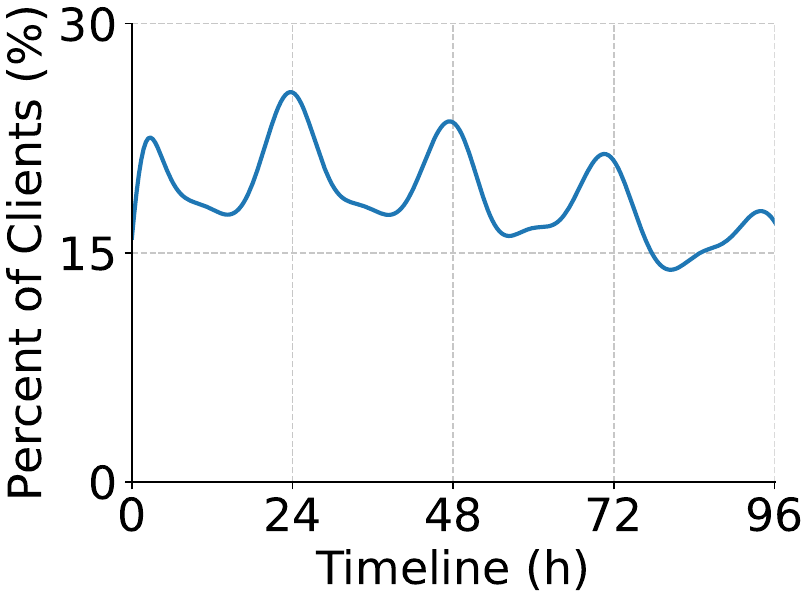} 
     \caption{Diurnal device availability. } 
   \label{fig:avail}
     \end{subfigure} 
  \hfill
  \begin{subfigure}{1.7in} 
  \centering
      \includegraphics[ scale=0.3]{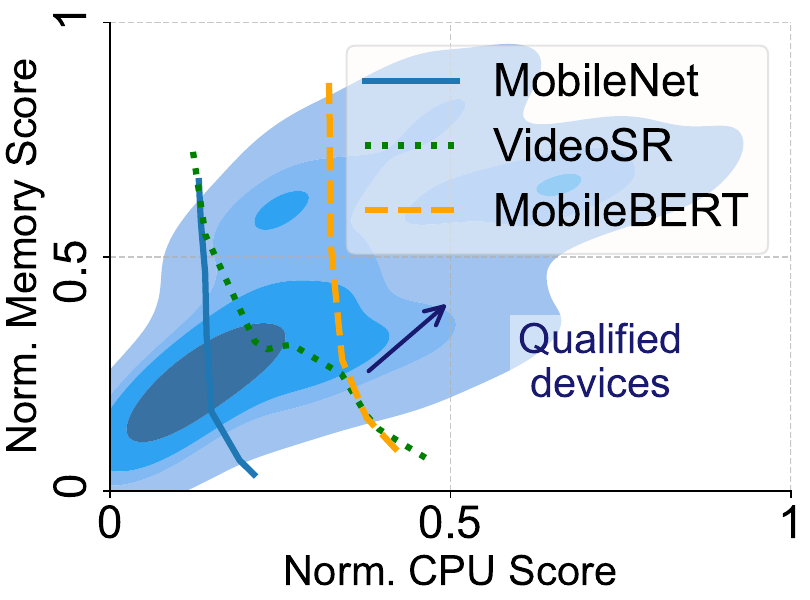} 
      \caption{Device hardware heterogeneity.}
   \label{fig:device-dist}
 \end{subfigure}
    \caption{\fl resources exhibit both high variance in availability and capacity.} 
   \label{fig:client}
\end{figure}

\paragraph{Single \fl Job's Resource Management:}
\fl resources often refer to edge devices with limited capacities, including smartphones, laptops, and Internet of Things (IoT) devices.
A major challenge of resource management in \fl stems from the unique characteristics of \fl resources in terms of availability and heterogeneity.

\noindent\emph{Dynamic availability.}
Unlike cloud resources (\eg, GPUs) that are accessible on demand, edge devices in \fl are only available for training under specific conditions, such as when they are charging and connected to WiFi.
We analyze a real-world client availability trace~\cite{fedscale}, which encompass 180 million trace items of devices behavior over a week. 
Figure~\ref{fig:avail} shows that the number of available devices (charging and connected to WiFi) changes over time with diurnal pattern.
This fluctuation creates inherent scheduling delays, as \fl jobs must wait for enough devices to become available before training can begin.
% a sufficient number of device resources.

\noindent\emph{Device heterogeneity.}
Edge devices in \fl vary widely in hardware capabilities, including memory and CPU capacity, as well as in data availability and software version, leading to inconsistent response times.
Figure~\ref{fig:device-dist} showcases this heterogeneity, focusing on variations in memory and CPU capacities among edge devices, based on data from AI Benchmark~\cite{ai-bench}.  
It also annotates the minimum hardware requirements needed to execute three popular on-device ML models within a reasonable time.  
Additionally, \fl jobs often target specific device subsets based on factors like data availability, software version, or hardware capacity.
These subsets can \emph{overlap, nest, or be mutually exclusive}, leading to complex resource contention patterns where jobs compete for the same devices.

Together, dynamic availability and device heterogeneity slow down \fl training, prompting solutions like client selection~\cite{oort,relay,divfl}, quantization~\cite{quant-fl}, and model aggregation~\cite{gluefl} to optimize single-job performance.

\begin{figure}[t!] 
   \begin{subfigure}{0.44\textwidth}
      \centering
   \includegraphics[clip, trim=0 55 0 40,  scale=0.21]{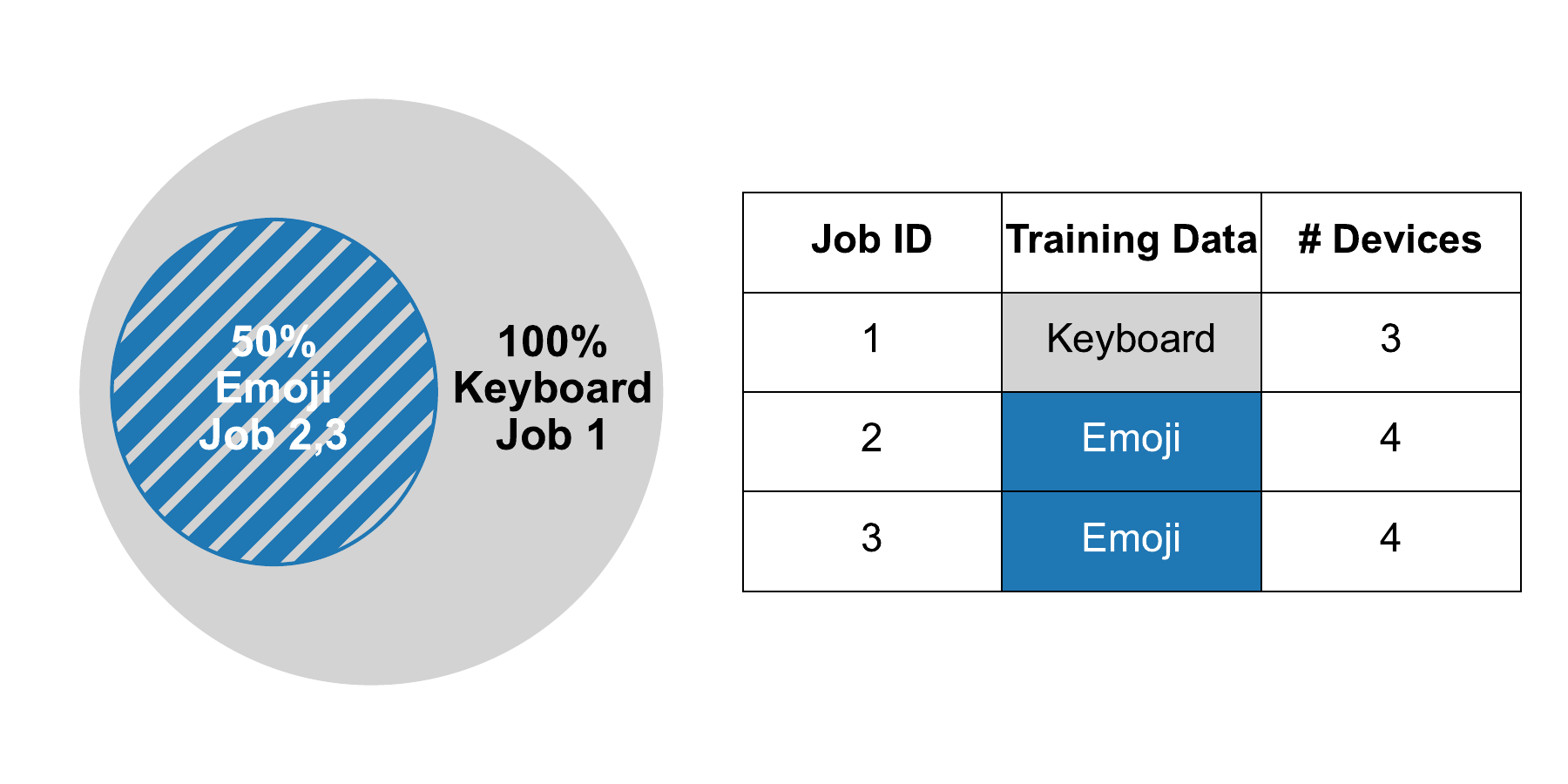}
   \caption{Setup.}
      \label{fig:toy-setup}
   \end{subfigure}   
   
   \begin{subfigure}{0.5\textwidth}
      \includegraphics[clip, trim=10  30 0 20,  scale=0.27]{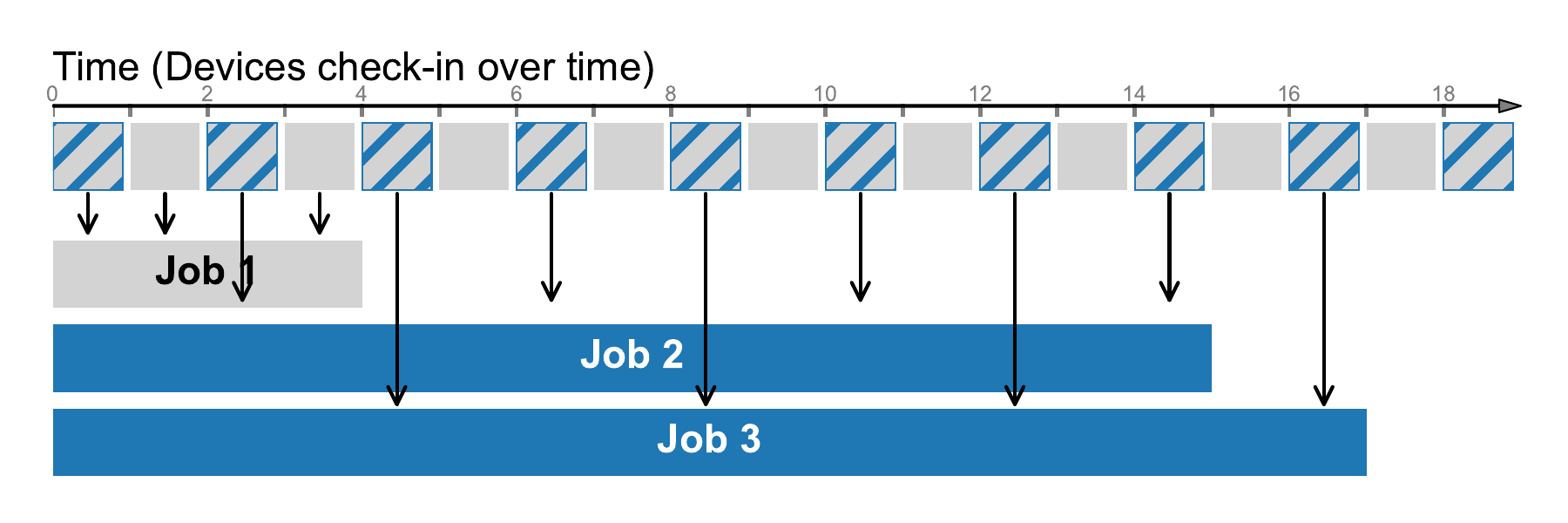}
      \caption{Random Matching ($\bar{JCT}$=12).}
      \label{fig:random}
   \end{subfigure} 
   \begin{subfigure}{0.5\textwidth}
      \includegraphics[clip, trim=10  30 0 20,  scale=0.27]{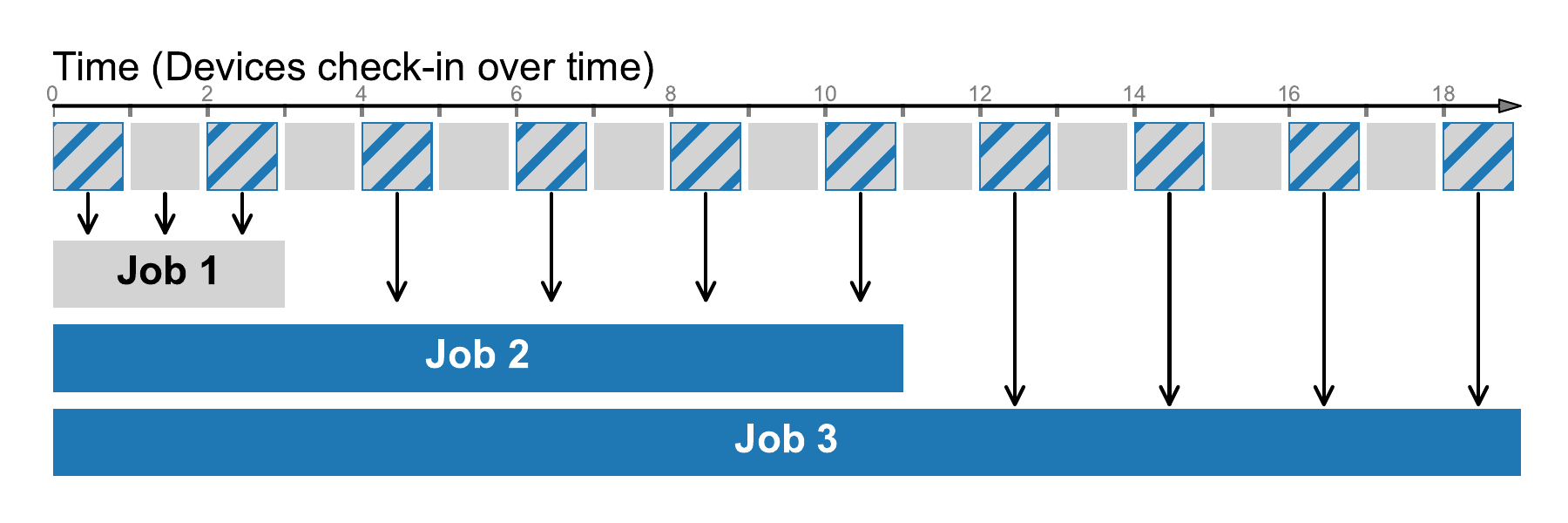}
      \caption{Shortest remaining service first (SRSF) ($\bar{JCT}$=11).}
      \label{fig:srsf}
   \end{subfigure} 
   \begin{subfigure}{0.5\textwidth}
      \includegraphics[clip, trim=10 30 0 20,  scale=0.27]{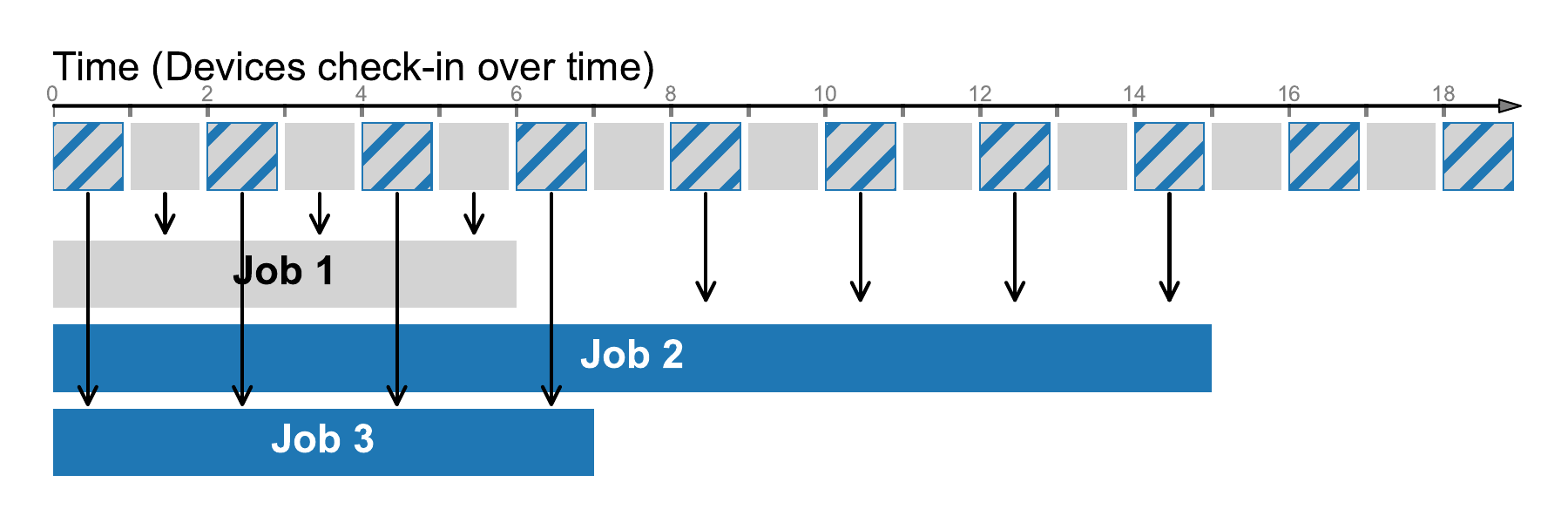}
   \caption{Optimal ($\bar{JCT}$=9.3).}
   \end{subfigure}
       \caption{Toy example showing three resource schedules across multiple CL jobs.
       Job demands and resource eligibility are shown in the top row.
       Devices check in at a constant rate. 
       Eligible devices only for Emoji jobs are marked with blue; all devices are eligible for the Keyboard job.
       The label of each client indicates its job assignment. 
      Random Matching and SRSF inefficiently allocate scarce Emoji-eligible devices to Job 1, which already has sufficient Keyboard-eligible resources.
       In contrast, the optimal schedule allocates these scarce resources to Job 2 followed by Job 3, minimizing the average JCT.
       }
      \label{fig:toy}
   \end{figure} 

\subsection{Multiple \fl Jobs' Resource Management}
\label{sec:flsys}

To orchestrate multiple \fl jobs, several \fl resource managers have been proposed with three primary designs.  
\begin{enumerate}%[label=(\roman*)]
    \item Apple's \fl resource management~\cite{apple} is driven by clients, where each client independently samples from a list of \fl jobs they are able to execute.
    
    \item Meta's \fl resource manager~\cite{meta} is centralized, where it randomly matches each client with one eligible \fl job.
    
    \item Google's \fl resource manager~\cite{google} is driven by jobs, where each job independently samples from available clients. 
\end{enumerate}

Despite the seeming variety in their designs, existing resource managers all rely on random device-to-job matching.
This approach works when devices are abundant but fails to address resource contention when demand exceeds supply, resulting in longer scheduling delays and inefficient resource use.

\subsection{Limitation and Opportunities}

\label{sec:motivation}

\paragraph{Limitations of the state-of-the-art.}
To highlight the shortcomings of existing methods, consider the toy example in Figure~\ref{fig:toy}, which compares three scheduling strategies for three \fl jobs—“Emoji Prediction Job 1,” “Emoji Job 2,” and a “Keyboard Prediction Job”—competing for devices with varying eligibility.
% Consider the toy example in Figure~\ref{fig:toy} that compares three scheduling solutions: random matching as specified above, Shortest-Remaining-Service-First (SRSF) proposed for cloud ML scheduling \cite{tiresias}, and the optimal solution.
We assume all jobs arrive simultaneously and clients with different eligibilities become online constantly over time.
Both random matching and SRSF waste scarce resources (\ie, Emoji clients) on the Keyboard job, which has ample device options, while the optimal schedule prioritizes efficiency, reducing JCT to 9.3 time units.
Note that the contention patterns in real-world scenarios are often more complex and larger in scale than the one presented in this example.

\begin{figure}[t!]
   \centering 
     \begin{minipage}{1.5in}
      \includegraphics[trim=0 0 0 0,clip,scale=0.3]{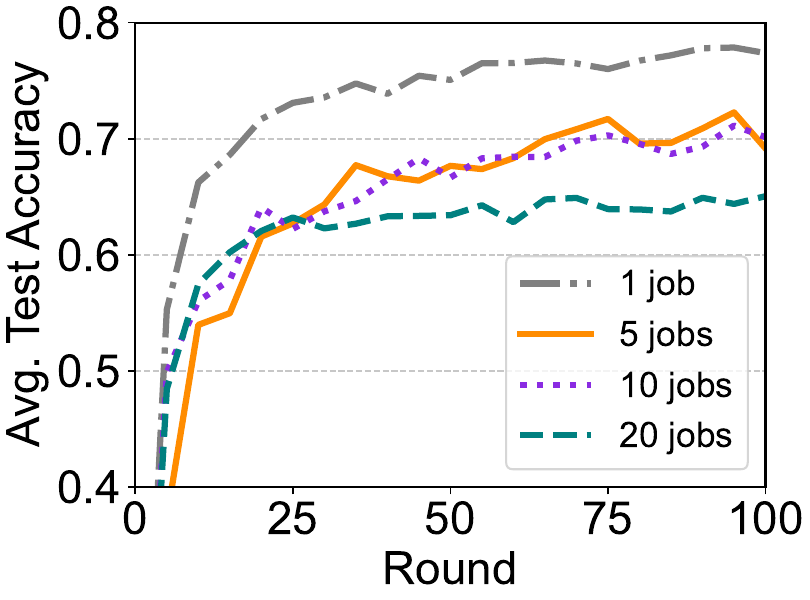} 
     \caption{Impact of resource contention. } 
   \label{fig:contention} 
        \end{minipage} 
      \hfill
     \begin{minipage}{1.5in}
      \includegraphics[scale=0.3]{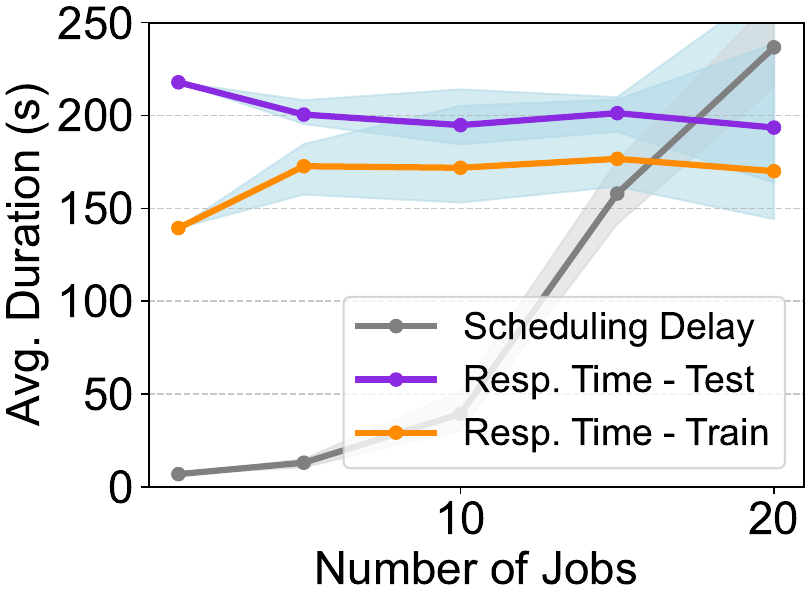} 
     \caption{JCT breakdown in a single round.} 
   \label{fig:jct-breakdown} 
        \end{minipage} 
\end{figure}

\paragraph{Impact of resource contention.} 
Single-job FL optimizations, such as client selection~\cite{oort,relay,divfl}, often assume that \fl jobs have access to sufficient number of online devices at any time.
However, in practice, there could be multiple \fl jobs running at the same time and competing for the same set of devices, leading to resource contention~\cite{google}.
This can significantly impact the performance of \fl jobs, such as accuracy and end-to-end training time.

%\amber{can be improved}
We analyze the impact of resource contention on \fl jobs' performance in Figure~\ref{fig:contention}.
In this experiment, the resource pool is evenly partitioned and managed by each job, who aims to train a ResNet-18 model with 100 clients per round on the FEMNIST dataset~\cite{FEMNIST}.
We vary the number of concurrently running jobs to observe its impact.
As more jobs share the same device pool, the available device choices for each job become increasingly constrained, leading to a noticeable degradation in the round-to-accuracy performance.
Hence, evenly partitioning resource pool for each \fl job is no better than a shared device pool in terms of participant diversity.

\paragraph{Breakdown of request completion time.}  
While most existing \fl efforts, such as quantization~\cite{quant-fl} and client selection~\cite{gluefl}, predominantly focus on model convergence rate or \emph{response collection time}—--\ie, the time needed to collect a sufficient number of responses—--they often overlook a critical component: \emph{scheduling delay}, as depicted in Figure~\ref{fig:jct}.

Figure~\ref{fig:jct-breakdown} breaks down JCT for a single training round under varying contention levels, using the same setup as Figure~\ref{fig:contention}.
Utilizing the same experimental setup as in Figure~\ref{fig:contention}, we quantify both the average scheduling delay and response collection time with random device-to-job matching during one round of training and testing.  
The shaded regions cover the duration for each individual job. 
As our results in Figure~\ref{fig:jct-breakdown} indicate, scheduling delay can significantly impact overall JCT, especially when resource supply falls short of demand.

\section{\name Overview}
\label{sec:overview}  

\name serves as a standalone \fl resource manager that operates at a layer above all \fl jobs, and it is responsible for allocating each checked-in resource to individual jobs.
Figure~\ref{fig:amg-overview} illustrates \name's workflow and its role within the lifecycle of an \fl job. 
 
 \begin{figure}[t!]
    \centering
    \includegraphics[trim=0 95 90 0,clip,scale=1.2]{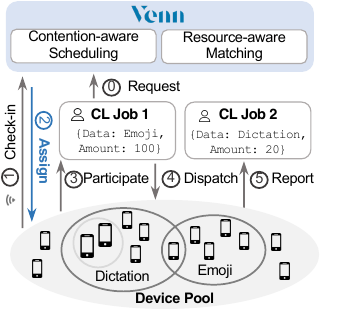} 
      \caption{\name system overview.}
      \label{fig:amg-overview}
\end{figure}

In each execution round, a job submits resource requests to \name, specifying its device requirements (e.g., minimum CPU capacity) and resource demands (the number of devices needed per round) (\textcircled{0}). 
Meanwhile, edge devices continuously check in with \name as they become available over time (\textcircled{1}).
Based on the real-time resource demand and supply, \name generates a resource allocation plan to assign one \fl job to each checked-in device until the job's needs are met (\textcircled{2}).
Upon receiving the task assignment from \name, each device adheres to the allocation plan and participates in the corresponding job (\textcircled{3}).
The device then retrieves the computation plan from the job and performs on-device training or inference~\citep{walle,coreml}  (\textcircled{3} and \textcircled{4}). 
Finally, devices either submit their training results to the job or disconnect if their availability changes (e.g., battery runs low) (\textcircled{5}).

Steps \textcircled{3} to \textcircled{5} follow standard \fl protocols between jobs and devices, while \name focuses on optimizing the job-to-device assignment in Step \textcircled{2}.
This allocation step is critical for managing resource contention and reducing scheduling delays.
A detailed breakdown of responsibilities is provided in Appendix~\ref{sec:app-resp}.

\section{Resource Scheduling in \name}
\label{sec:algo}

In this section, we outline the resource scheduling algorithm in \name, starting with defining the scheduling problem in \fl (\S\ref{sec:prob}).
Next, we present our scheduling algorithm in two parts: determining the job scheduling order to minimize scheduling delay (\S\ref{sec:irs}) and matching devices to jobs to optimize response collection time (\S\ref{sec:match}).
Finally, we describe enhancements for real-world deployment (\S\ref{sec:enhancement}).

\subsection{Problem Statement}
\label{sec:prob}
Given a collection of \fl jobs---along with their device requirements and resource demands---and a set of heterogeneous devices that arrive and depart over time, \name should efficiently assign devices to \fl jobs in order to reduce the average job completion time (JCT). 
The scheduling problem can be mathematically modeled as a multi-commodity flow (MCF) problem with integer constraints, where each \fl job is modeled as a distinct commodity and each device serves as an intermediate vertex between the source and sink of its corresponding eligible \fl job. 
Then the goal of this integer MCF problem is to minimize the average JCT of jobs, which is known to be NP-hard~\citep{multi-commodity}.
Even for its linear approximation, the time complexity is exacerbated by the planetary scale of devices involved and diverse device requirements from jobs, making existing solutions computationally infeasible.
% We now formally define such resource scheduling problem.

\paragraph{Problem definition.}
Let $\mathbb{J} = {J_1, J_2, \ldots, J_m}$ be the set of jobs, with resource demands $\mathbb{D} = {D_1, D_2, \ldots, D_m}$, where $D_i$ is the number of devices required by job $J_i$.
% Assume we have $m$ jobs $\mathbb{J}= \{J_1,J_2, \allowbreak ..., J_{m} \}$ with their resource demands $\mathbb{D}= \{D_1,D_2, \allowbreak ..., D_{m} \}$.
Let $\mathbb{S} = S_1 \cup S_2 \cup \cdots \cup S_n$ be the set of available devices, where $S_k$ is the subset of devices eligible for jobs with specific requirements (e.g., hardware capacity), and $f(J_i) = S_k$ maps job $J_i$ to its eligible device subset.
The goal is to match each checked-in device $s \in S_k$ with one job $J_i$, where $f(J_i)=S_k$, $\forall s \in \mathbb{S}$, in order to minimize average JCT, which consists of scheduling delay and response collection time.

\paragraph{Tradeoff between scheduling delay and response collection.}
Minimizing scheduling delay often conflicts with minimizing response collection time.
Assigning devices quickly reduces delay but may involve slower devices, increasing response time, since the response collection time is usually determined by the final responding device among the target number of participants.
Waiting for faster devices, however, reduces response time at the cost of higher delay.

At its core, \name aims to find a sweet spot in the trade-off in order to optimize average JCT. Specifically, we decouple the \fl resource allocation problem as following two questions:
\begin{enumerate} 
  \item \emph{How should jobs be ordered to minimize scheduling delay? (\S\ref{sec:irs})}
  \item \emph{How should devices be matched to jobs to minimize response collection time while optimizing JCT? (\S\ref{sec:match})}
\end{enumerate}
 
\definecolor{egyptianblue}{rgb}{0.06, 0.2, 0.65}
\newcommand{\comm}[1]{\textcolor{egyptianblue}{#1}}

\begin{algorithm}[t!]
    \caption{Intersection Resource Scheduling}
    \label{algo:amg-sched}
    \begin{algorithmic}[1]
        \STATE {\bfseries Input:} Job Groups $\mathbb{G}$, Devices $\mathbb{S}$
        
        \textbf{Function}\textit{\name-Sched}{Job Groups $\mathbb{G}$, Devices $\mathbb{S}$}
        
       $\triangleright$ {\comm{Sort within job group (\S\ref{sec:within})}}
        \FOR{$G_j$ in $G$}
            \STATE Sort $J_i$ by $D_i$ in ascending order, $\forall J_i \in G_j$ \label{code:sort}
        \ENDFOR
        
  $\triangleright${\comm{Generate initial allocation $S_j'$ for each group $G_j$}}
        \STATE $S = \cup_{j=1}^n S_j$ \label{code:init-start}
        \STATE Sort $G_j$ by $|S_j|$ in ascending order, $\forall G_j \in \mathbb{G}$ \label{code:across}
        \FOR{$G_j$ in $G$}
            \STATE $S_j' = S \cap S_j, S = S \setminus S_j'$
        \ENDFOR \label{code:init-end}
        
 $\triangleright${\comm{Allocate resource $S_j'$ for each group $G_j$}}
        \STATE Sort $G_j$ by $|S_j|$ in descending order, $\forall G_j \in \mathbb{G}$ \label{code:check}
        \FOR{$G_j$ in $G$} \label{code:loop-start}
            \IF{$|S_j'| > 0$} \label{code:not-none}
                \FOR{$G_k \in \mathbb{G}: |S_k| < |S_j|, S_k \cap S_j \neq \emptyset$}
                    \STATE $m_j', m_k'$ = \texttt{get-queue-len}() \label{code:queue-len}
                    \IF{$\frac{m_j'}{|S_j'|} > \frac{m_k'}{|S_k|}$} \label{code:condition}
                        \STATE $S_j' = S_j' \cup (S_j \cap S_k)$ \label{code:update-start}
                        \STATE $S_k' = S_k' - S_j'$ \label{code:update-end}
                    \ELSE
                        \STATE break \label{code:break}
                    \ENDIF
                \ENDFOR
            \ENDIF \label{code:across-end}
        \ENDFOR \label{code:loop-end}
        
        \STATE \textbf{Return} $\{G_j[0], S_j'\}, \forall j \in [1,n]$
         
    \end{algorithmic}
\end{algorithm}

\subsection{Intersection Resource Scheduling (IRS)}
\label{sec:irs} 
We now tackle the first question, minimizing the scheduling delay. Directly matching jobs to devices can be computationally expensive, especially when dealing with the immense scale of devices and jobs. 
The challenges posed by this problem are not solely due to its scale; they are also compounded by the diverse resource requirements of \fl jobs. 
Their various requirements introduce intricate resource contention patterns, further complicating the scheduling. 

To this end, \name introduces the Intersection Resource Scheduling (IRS) problem to account for this resource contention.  
Basically, each \fl job $J_ i \in \mathbb{J}$ may compete for a subset of devices $S_k \in \mathbb{S}$, denoted as $f(J_i)=S_k$, where these resource subsets can exhibit relationships that are \emph{inclusive, overlapping, or nested}.
We created an integer linear programming (ILP) formulation (Appendix~\ref{sec:ilp}) to optimally allocate resources to minimize scheduling delay and propose a heuristic to tackle the problem.

To tackle the scale of devices and jobs, \name aims to determine a job scheduling order, where each checked-in device is assigned to the first eligible job in the order, rather than scattering resources across multiple jobs.
Such a fixed job order can minimize the scheduling delay while reducing computational complexity.

With the objective of determining a job scheduling order, \name first groups jobs $\mathbb{J}$ into \emph{Resource-Homogeneous Job Groups} $\mathbb{G} = \{G_1, G_2, ..., G_n\}$ by their resource requirements, where each job group $G_j=\{J_i | f(J_i) = S_j, \forall J_i \in \mathbb{J}\}_{i=1}^{m_j}$ contains all jobs with the same resource requirement. 
\name addresses the problem using a two-step approach, with each step occurring at a different level of scheduling granularity, as outlined in Algorithm~\ref{algo:amg-sched}: 
(i) Ordering \textbf{within a job group} to optimize local resource scheduling (\S\ref{sec:within}).
(ii) Merging orders \textbf{across job groups} to ensure global scheduling efficiency (\S\ref{sec:across}).
\name invokes Algorithm~\ref{algo:amg-sched} on job's request arrival and completion.
By breaking down the overall problem into two steps, we further reduce the problem's complexity without affecting the scheduling efficiency.
We provide theoretical insights to show the effectiveness of this problem decomposition in Appendix~\ref{sec:irs-lm}.

\subsubsection{Intra-Job Group Scheduling} 
\label{sec:within} 
  
Within each job group that shares the same device specifications, \name prioritizes jobs based on their remaining resource demand (Algorithm~\ref{algo:amg-sched} line~\ref{code:sort}). 
This ordering strategy aims to minimize the intra-group scheduling delay. 
Prioritizing jobs with smaller remaining resource demands has been shown to be effective in similar scheduling problems~\cite{srsf}. 
We choose this locally optimal scheduling strategy with the observation that it aligns well with the goal of achieving a globally optimal scheduling order.  
By default, the remaining resource demand refers to the needs of a single request within one round.
However, it can also encompass the total remaining demand for all upcoming rounds, provided such data is available.

\subsubsection{Inter-Job Group Scheduling} 
\label{sec:across} 
Addressing the scheduling problem across multiple job groups introduces an additional layer of complexity due to the intricate patterns of resource contention. 
Traditional scheduling algorithms, such as Random Matching and Shortest Remaining Service First (SRSF), are not designed to account for resource contention across job groups, leading to poor average JCT.

An effective scheduler should recognize and adapt to the resource contention patterns among jobs.
\name prioritizes groups with scarce resources (fewer eligible devices) to prevent delays from resource-rich groups.
Additionally, when a particular resource type is in high demand, the scheduler should judiciously allocate intersected resources to the job group with a longer queue, as this group contributes more significantly to the average scheduling delay.

To achieve this, \name allocates the intersected resources across different job groups based on two factors related to average scheduling delay: 
\begin{enumerate}
  \item \emph{Amount of eligible resources allocated}: the job group with smaller amount of eligible resources may have a longer scheduling delay under the same condition. 
  \item \emph{Queue length}: the job group with a longer queue length contributes more to the average scheduling delay as more jobs are waiting for the same type of resources. 
\end{enumerate}

Algorithm~\ref{algo:amg-sched} outlines the steps \name takes to allocate the current resource across job groups.
First, \name initializes resource allocation among job groups by starting to allocate resources to the job group with the most scarce resources (line~\ref{code:across}).
This results in an initial allocation plan with no resource sharing across job groups (lines~\ref{code:init-start}--\ref{code:init-end}), setting the stage for subsequent cross-group allocations.
  
To determine how to allocate intersected resources across job groups, \name greedily evaluates whether a job group with more abundant resources should acquire intersected resources from groups demanding more scarce resources with the objective of minimizing the average scheduling delay.
This evaluation starts with the job group possessing the most abundant resources (line~\ref{code:check}). 
If the resources allocated to this group remain unclaimed by other groups (line~\ref{code:not-none}), \name will decide how much resources from subsequent job groups should be allocated to it. 
Specifically, \name prioritizes job groups with longer queue lengths and fewer allocated resources, guided by a ratio that balances the number of affected jobs against the amount of allocated resources (line~\ref{code:condition}). 
If this ratio $\frac{m_j'}{|S_j'|}$ is larger than the one for the target resource group $\frac{m_k'}{|S_k'|}$, \name reallocates resources accordingly (lines~\ref{code:update-start}--\ref{code:update-end}). 
Otherwise, the algorithm ceases to allocate additional resources to the current job group from remaining groups (line~\ref{code:break}).
The reason is that if this job group needs more resources, it should first take the resources from job groups with relatively abundant resources. 

Function \texttt{get-queue-len}()  (line~\ref{code:queue-len}) would return the number of jobs $m'$ whose JCT would be delayed by potential group prioritization. 
For example, the affected queue length $m'$ may contain jobs from other job groups that have been deprioritized previously.
An easier way to approximate $m'$ is to use the group queue length.
If intersected resources have been decided to be allocated to job group $G_j$ over $G_k$, \name accumulates and updates their current resource allocation $S_j', S_k'$ and queuing length $m_j'$ (line~\ref{code:update-start}--\ref{code:update-end}). 

The \textbf{time complexity} of Algorithm~\ref{algo:amg-sched} is $\max (O(m\log m)$, $O(n^2))$, where $m$ is the number of ongoing jobs and $n$ is the number of job groups. 
We illustrate the \textbf{theoretical insight} behind the scheduling algorithm in Appendix~\ref{sec:proof} to illustrate the effectiveness of \name's approach.

\subsection{Device Matching}
\label{sec:match}

Now we focus on minimizing the response collection time, a significant contributor to the overall JCT, particularly when resource contention is low.
Existing cluster-level device-to-job matching solutions, either stick to a certain job order such as FIFO and SRSF, or match devices without a strategic algorithm such as random match, where none of them optimizes the job response collection time.

Response collection time is usually determined by the last successfully responding devices.
Hence, it can be reduced by allocating devices with similar higher capacity to the \fl job. 
Meanwhile, these high-end devices have lower probability to fail due to their quick task execution.
       
However, as mentioned in Section~\ref{sec:prob}, there is a trade-off between scheduling delay and response collection time. 
Intuitively, with limited device influx, priority should be given to minimizing the scheduling delay, which dominates the average JCT.
On the other hand, with sufficient device influx to fulfill a job request within a short period, we should consider minimizing the response collection time while obeying the scheduling order given by Section~\ref{sec:irs}.

\begin{algorithm}[t!]
  \caption{Device Matching}
  \label{algo:amg-match}
  \begin{algorithmic}[1]
   \STATE {\bfseries Input:} Jobs $J_{i}$, Devices $S_j' \in \name-SCHED(\mathbb{G}, \mathbb{S})$
   \STATE $S = \{ S^1, S^2, ..., S^V\}$ {$\triangleright$ \comm{Evenly partition devices}}
   \STATE $g_v = \frac{t^v}{t^0}, \forall v \in [1,V]$ $\triangleright$ \comm{Response time speed-up} \label{code:speedup}
   \STATE \textbf{Function:} \name-Match(Job $J_{i}$, Resource $S_j'$)
   \STATE $c_i = \frac{t_{response}}{t_{schedule}}$ $\triangleright$ \comm{Assign tiers in a rotating manner} \label{code:ratio}
\\
   $\triangleright$ \comm{Decide whether to trigger tier-based matching} \label{code:random}
   \STATE $u = \texttt{randint}(0, V)$ $\triangleright$ 
   \IF{$V + g_{u} \times c_i < c_i + 1$}  \label{code:jct}
    \STATE Update $S_{j}' = S_j' \cap S^{u}$ $\triangleright$ \comm{Assign tier $u$ devices to $J_{i}$} \label{code:tier}
   \ENDIF
   \STATE \textbf{Return:} $\{J_i, S_j'\}$
  \end{algorithmic}
\end{algorithm} 
To this end, we propose a resource-aware tier-based device-to-job matching solution to reduce the response collection time for each job as illustrated in Algorithm~\ref{algo:amg-match}. 

The matching algorithm is activated only for jobs that are currently served, as scheduled by Algorithm~\ref{algo:amg-sched}.
For each such job, \name partitions the eligible devices into $V$ tiers based on their hardware capabilities, where $V$ denotes the granularity of this partitioning.
If a job has been served before, \name adaptively sets the tier partition thresholds based on the hardware capacity distribution of the devices that participated in earlier rounds.
Otherwise, \name forgoes tier-based matching and profiles the devices allocated to the job's current request to inform future device tier partitioning.

For each served job request, \name randomly selects a device tier, denoted as $S^u$, to the job (Algorithm~\ref{algo:amg-match} line~\ref{code:random}).
This randomized tier assignment aims to expose each \fl job to a diverse set of devices, rather than confining them to high-end devices. 
Given that the response collection time is determined by the slowest responding participant, tier-based assignment does not adversely affect this metric.

As illustrated in Figure~\ref{fig:tier}, while tier-based matching may increase the scheduling delay by a factor of $V\geq1$, it can concurrently reduce the response collection time by a factor of $g\leq1$. 
The algorithm proceeds to perform such tier-based device-to-job matching for the job $J_i$ only if its JCT can be reduced, \ie, $1+c_i > V +  c_i g_u$ (line~\ref{code:jct}).
If the condition holds, \name allocates device tier $S_{u}$ to the job, effectively updating the set to $S_j' \cap S^{u}$.
Meanwhile the leftover device tiers would be allocated to subsequent jobs in the job group, maximizing the utilization of available resources.

To determine the response time speed-up factor $g$ for tier-based matching, we note that the device response time distribution adheres to a log-normal distribution~\citep{flint}. 
We use the 95th percentile as the statistical tail latency to account for the overall response collection time, thereby excluding failures and stragglers. 
\name profiles and estimates the response collection time for each device tier $v \in [1,..,V]$ and subsequently computes the speed-up factor $g_v = \frac{t^v}{t^0}$ relative to a non-tiered scenario (line~\ref{code:speedup}).

\begin{figure}[t!]
    \centering
    \includegraphics[trim=0 160 0 0,clip,scale=1]{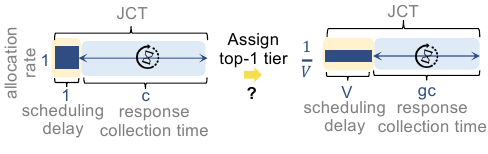} 
      \caption{Visualized tier-based device-to-job matching condition.}
       \label{fig:tier}
 \end{figure} 
 
\subsection{Enhancements}
\label{sec:enhancement}

\paragraph{Dynamic resource supply.}

As shown in Figure~\ref{fig:avail}, the total available \fl resources change significantly over time. 
To address this, \name continuously records each device eligibility through a time-series database. 
This database is then queried for resource eligibility distribution from the past time window.
However, relying solely on momentary eligible resource rates for input into the scheduling algorithm is inaccurate.
This is primarily due to the varying resource arrival patterns, as demonstrated in Figure~\ref{fig:avail}. 
Since \fl jobs span multiple days with diurnal resource patterns, \name averages eligibility over 24 hours for robust scheduling.
As a result, the scheduler can become both farsighted and robust, effectively accommodating the dynamic nature of resource availability. 

\paragraph{Starvation prevention.}  
Our heuristic can lead to larger \fl jobs being starved due to the preference given to smaller jobs. 
This is not acceptable especially when the jobs are initiated by different \fl developers who require performance guarantees. 
\name grants fairness to jobs to avoid such starvation. Specifically, our goal is to guarantee that the scheduling latency of a job $J_i$ is no worse than fair sharing, which is defined as $T_i=M*sd_i$, where $M$ is the number of simultaneous \fl jobs and $sd_i$ represents the JCT without contention. 
Then, we adjust each job demand to be $d_i' = d_i \times  (\frac{t_i}{T_i})^{\epsilon}$ to ensure fairness within a job group, and adjust each group queue length $q_j' = q_j  \times  (\frac{\sum_{J_i \in G_j} T_i}{ \sum_{J_i \in G_j} t_i})^{\epsilon}$ to ensure fairness across job groups.
$t_i$ is the time usage of job $J_i$ at the moment and $\epsilon \in [0, \infty )$ is a fairness control knob. 
When $\epsilon=0$, the algorithm is identical to the one in Section~\ref{sec:irs}. 
As $\epsilon \rightarrow \infty$, the fairness multiplier dominates the scheduling, resulting in maximum fairness. We show that \name improves JCT over its counterparts with our starvation design (\S\ref{sec:ablation}).

\section{Evaluation}
\label{sec:eval} 
In this section, we evaluate the effectiveness of \name through event-driven simulation and testbed experiments. 
 Our key takeaways are:
\begin{itemize}
\item \name reduces average job completion time (JCT) by up to $1.88\times$ compared to state-of-the-art baselines, without compromising model accuracy, across diverse real-world \fl workloads (\S\ref{sec:macro}).
\item \name outperforms its counterparts by intelligently scheduling jobs and matching devices to jobs, leveraging different design components (\S\ref{sec:breakdown}).
\item \name's benefits are robust under a wide range of \fl workloads and environment setups (\S\ref{sec:ablation}).
\end{itemize}

\subsection{Experiment Setup}
\label{sec:setup}

\begin{figure}[t!]
   \centering 
    \begin{subfigure}[t]{0.24\textwidth}
      \includegraphics[trim=0 0 0 0,clip,scale=0.3]{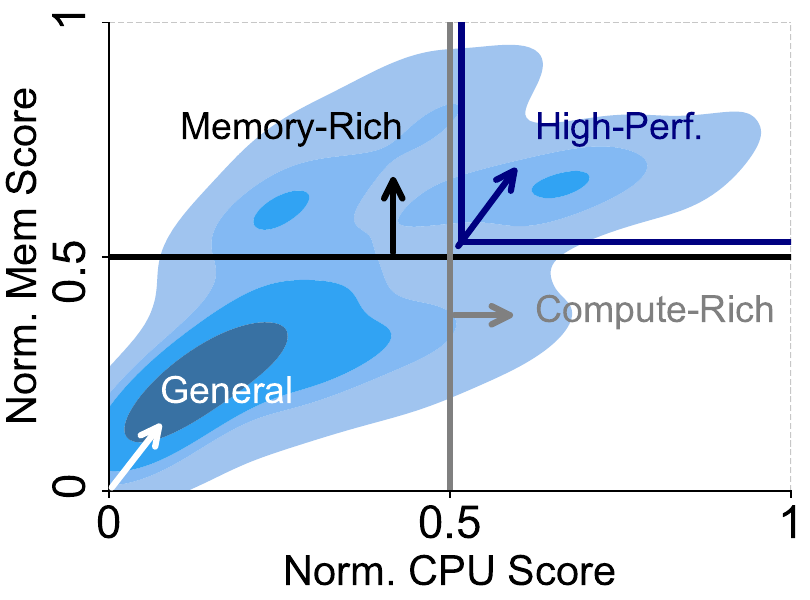} 
     \caption{Device eligibility trace.} 
   \label{fig:elig-trace} 
     \end{subfigure}
         \begin{subfigure}[t]{0.23\textwidth}
      \includegraphics[trim=0 0 0 0,clip,scale=0.3]{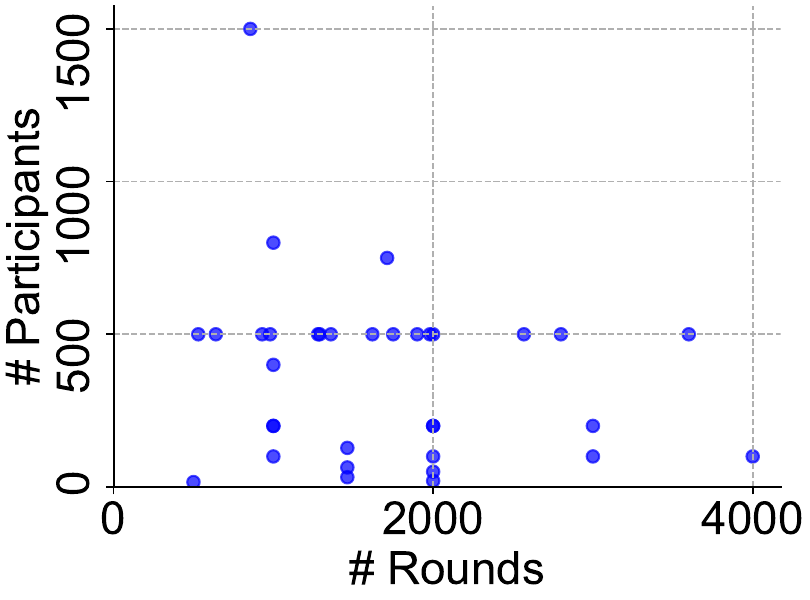} 
     \caption{\fl job demand trace.} 
   \label{fig:job-trace} 
     \end{subfigure}
  \caption{Device and job trace used in experiments. (a) Devices are stratified into four regions  to explore different overlap patterns. (b) The diverse workloads in experiments are derived from the job demand trace based on demand characteristics. }
\end{figure}

\paragraph{Testbed.} 
To rigorously evaluate \name, we employ a two-pronged approach. 
First, we have developed a high-fidelity simulator that replays client and job traces, effectively emulating the dynamics of the scheduling environment for large-scale evaluations.
Second, we deploy real \fl systems to execute actual \fl jobs at a smaller scale of devices.

\paragraph{Resources.}
To faithfully emulate heterogeneous device runtimes, networking, and availability, we use device traces from FedScale~\cite{fedscale} (Figure~\ref{fig:client}) and AI Benchmark~\cite{ai-bench} (Figure~\ref{fig:elig-trace}). 
%These traces enable us to model diverse computation and communication latencies, informed by device capabilities and model complexities. 
Each unique device trace is limited to one \fl job per day for realism.

\paragraph{\fl jobs.}  
We evaluate \name over multiple synchronous \fl jobs~\cite{google-query-sugg}, where each successful training round requires a minimum of 80\% target participants to report back within a deadline, which is set to be 5min - 15min depending on the round demand.
Our approach also extends to asynchronous \fl jobs, as scheduling decisions depend solely on remaining resource demand, not request submission timing.
Jobs are drawn from diverse real-world \fl applications~\cite{gboard-fl, google-query-sugg, google-keyboard, google-emoji, google-speech, google-health}, whose resource demand is depicted in Figure~\ref{fig:job-trace}. 
In the real \fl experiment, each job aims to train a ResNet-18~\cite{resnet} and MobileNet-V2~\cite{mobilenet} on FEMNIST dataset.

\paragraph{Workloads.} 
Our evaluation includes five workload scenarios that sample differently from the job trace in Figure~\ref{fig:job-trace} to rigorously evaluate \name's performance.
\emph{Even:} Sampled from all jobs, which is the default trace.
\emph{Small:} Uniformly sampled only from jobs with below-average total demand.
\emph{Large:} Uniformly sampled only from jobs with above-average total demand.
\emph{Low:} Uniformly sampled only from jobs with below-average demand per round.
\emph{High:} Uniformly sampled only from jobs with above-average demand per round.
Default simulation and real-world workloads contain 50 and 20 jobs, respectively.
Jobs arrive via a Poisson process with a 30-min average inter-arrival.

Device requirements are stratified into four categories based on the CPU and memory capacities (Figure~\ref{fig:elig-trace}) to create various resource contention patterns where the eligible resources for each job may overlap, contain, or be within the eligible resources of other jobs.
By default, each job is randomly mapped to one category among \emph{General} resources, \emph{Compute-Rich} resources, \emph{Memory-Rich} resources, \emph{High-Performance} resources.

\paragraph{Baselines.}
We compare \name against FIFO, SRSF, and an optimized random matching. 
The random matching algorithm typically assigns devices to eligible jobs randomly but is optimized here to schedule jobs in a randomized order, reducing round abortions under contention and establishing a stronger baseline.
Note that we only run \name with the starvation prevention strategy in Section~\ref{sec:ablation}.

\paragraph{Metrics.} 
Our primary performance metrics include the average job completion time (JCT).  
Note that while \name does not explicitly optimize for \fl job accuracy, it does not adversely affect it either.

\definecolor{lightgray}{gray}{0.95}
\newcolumntype{v}{>{\columncolor{lightgray}}c}
\begin{table}[t]
\small
\centering
\begin{tabular}{lccv}
%\hline
\Xhline{3\arrayrulewidth}
%\rowcolor[HTML]%{ECF4FF} 
      & \textbf{FIFO}  & \textbf{SRSF}  & \textbf{\name}            \\ \hline
%\rowcolor[HTML]{FFFFFF} 
\textbf{Even}  & 1.38$\times$ & 1.69$\times$ & 1.87$\times$ \\ %\hline
%\rowcolor[HTML]{ECF4FF} 
\textbf{Small} & 1.48$\times$ & 1.68$\times$ & 1.78$\times$ \\ %\hline
%\rowcolor[HTML]{FFFFFF} 
\textbf{Large} & 1.64$\times$ & 1.57$\times$ & 1.72$\times$ \\ %\hline
%\rowcolor[HTML]{ECF4FF} 
\textbf{Low}   & 1.55$\times$ & 1.66$\times$ & 1.88$\times$ \\ %\hline
%\rowcolor[HTML]{FFFFFF} 
\textbf{High}  & 1.42$\times$ & 1.41$\times$ & 1.63$\times$ \\ \Xhline{3\arrayrulewidth}
\end{tabular} 
\caption{Summary of improvements on average JCT over random matching on different \fl workloads.}
\label{tbl:res} 
\end{table}

%\begin{figure}[t!]
%   \centering 
%    \begin{subfigure}[t]{0.24\textwidth}
%      \includegraphics[trim=0 0 0 0,clip,scale=0.3]{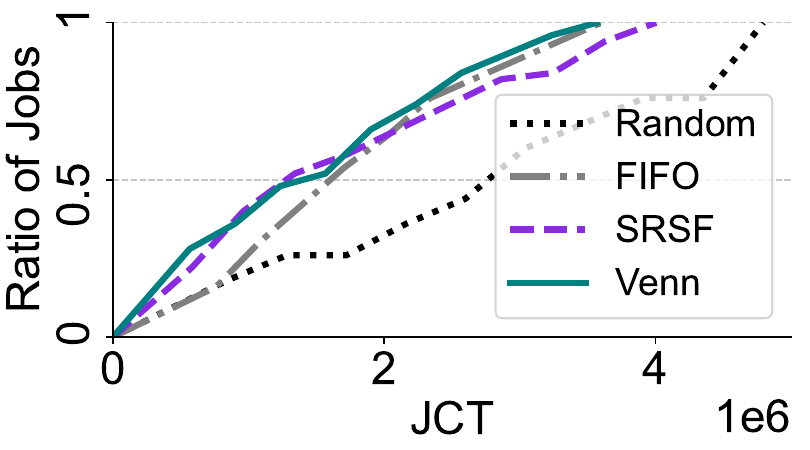} 
%     \caption{Low workload.} 
%   \label{fig:jct-cdf-tiny} 
%     \end{subfigure}
%         \begin{subfigure}[t]{0.23\textwidth}
%      \includegraphics[trim=0 0 0 0,clip,scale=0.3]{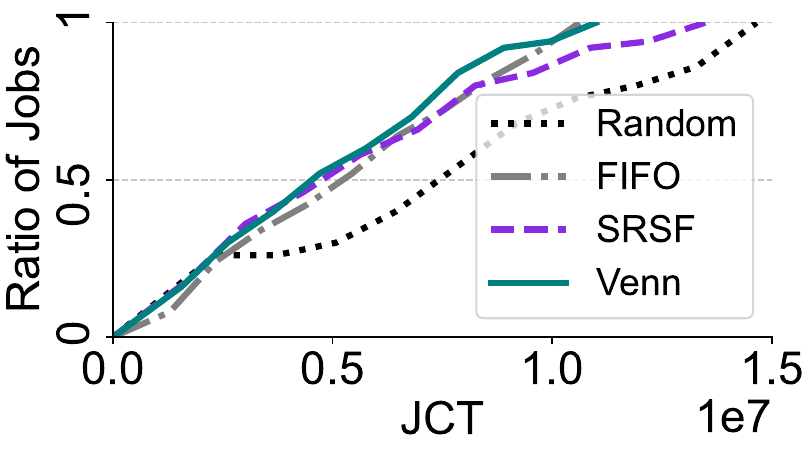} 
%     \caption{High workload.} 
%   \label{fig:jct-cdf-huge} 
%     \end{subfigure}
%  \caption{ \name achieve a higher ratio of jobs that complete faster. } 
%   \label{fig:jct-cdf} 
%\end{figure}

\subsection{End-to-End Performance }
\label{sec:macro}  

\textbf{\name achieves better average JCT Improvement.}
We assess the performance of different scheduling algorithms by evaluating its performance over different workloads.
We report the average JCT speed-up for each scheduling algorithm compared to the random scheduling in Table~\ref{tbl:res}. 
We observe that our scheduling algorithm consistently provides stable improvements in the average JCT across various workloads, which underscores the robustness of \name.
%We discuss the potential improvement when the total number of rounds is known a priori in Appendix~\ref{sec:clairvoyant}.

%\textbf{\name achieves higher ratio of jobs complete faster.}
%We report the cumulative distribution function (CDF) of JCT in Figure~\ref{fig:jct-cdf} for each scheduling algorithm.
%We observe that \name achieves a higher ratio of jobs that complete faster compared to the other algorithms. 

\textbf{\name achieves faster convergence without affecting accuracy.}
We report the final model test accuracy of \fl jobs under different schedules with the help of our \fl system at a smaller experiment scale.
As shown in Figure~\ref{fig:acc-time}, we observe that \name does not affect the final model test accuracy but speeds up the overall convergence process. 
%  improves the convergence speed by serving job resource requests as soon as possible, while not affect the final job accuracy.
%more discussion about the convergence performance.

\textbf{\name has negligible overhead.}
We emulated a large number of \fl jobs and groups to evaluate the scheduler's scalability. 
Our results in Figure~\ref{fig:algo-latency} demonstrate that the latency incurred by one-time triggering for scheduling and matching remains low, even with a substantial increase in job and group numbers.
This is due to its low time complexity of $\max(O(m\log m), O(n^2))$, where $m$ is the number of jobs and $n$ is the number of device groups.

\begin{figure}[t!]
   \centering 
     \begin{minipage}{1.57in}
      \includegraphics[trim=0 0 0 0,clip,scale=0.3]{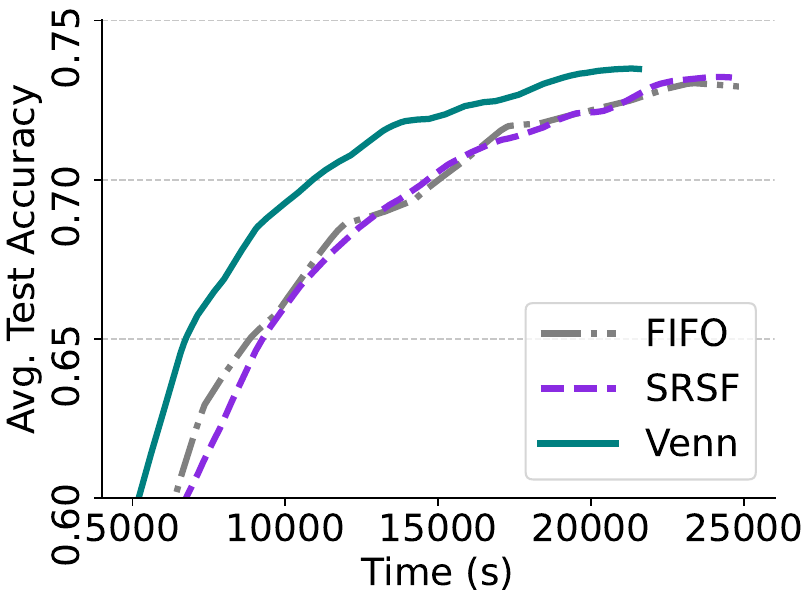} 
     \caption{\name does not affect the average test accuracy. }
   \label{fig:acc-time}
     \end{minipage}
     \begin{minipage}{1.48in} 
      \includegraphics[trim=0 0 0 0,clip,scale=0.3]{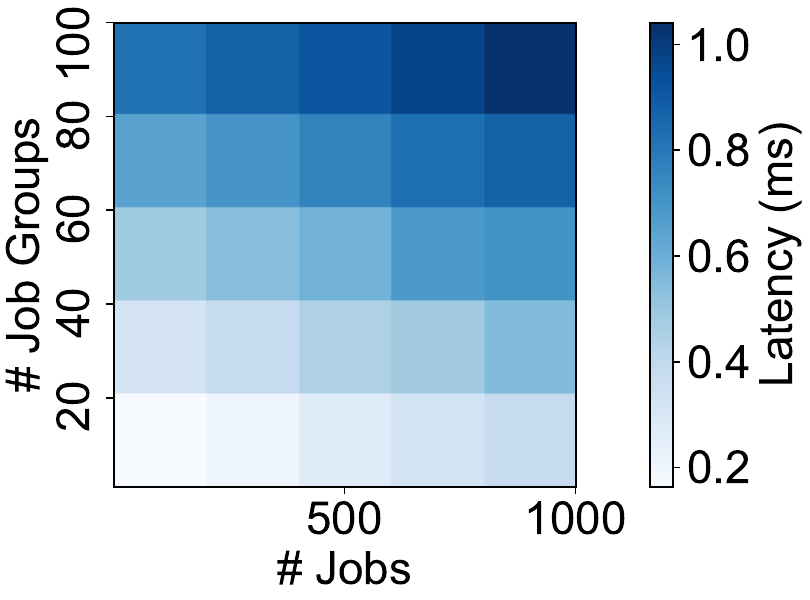} 
     \caption{\name introduces negligible overhead at scale.} 
   \label{fig:algo-latency} 
     \end{minipage}
\end{figure}

\subsection{Performance Breakdown}
\label{sec:breakdown}

\begin{figure}[t!]
   \centering 
    \begin{subfigure}[t]{0.23\textwidth} 
      \includegraphics[trim=0 0 0 0,clip,scale=0.3]{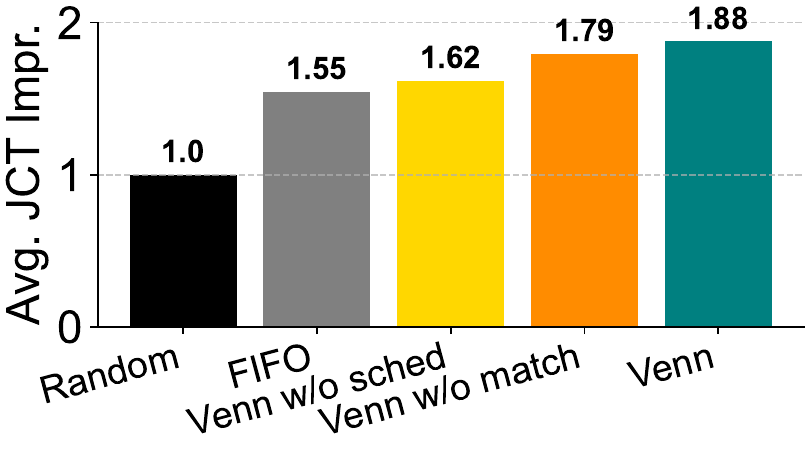} 
     \caption{Low workload.} 
   \label{fig:breakdown-low} 
     \end{subfigure}
         \begin{subfigure}[t]{0.23\textwidth} 
      \includegraphics[trim=0 0 0 0,clip,scale=0.3]{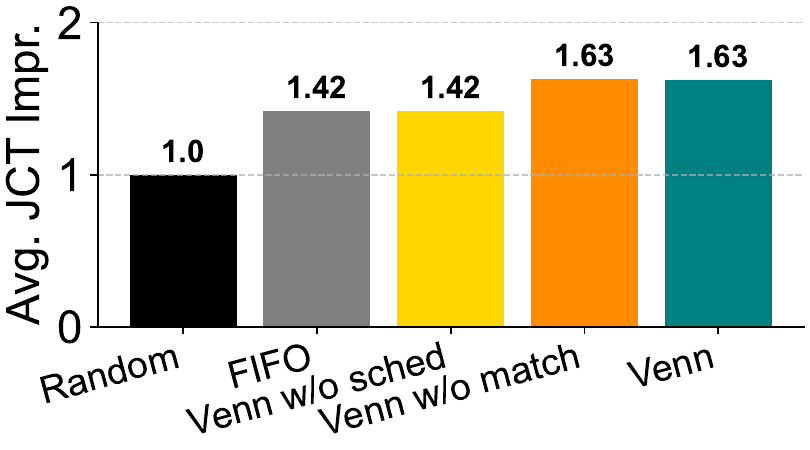} 
     \caption{High workload.} 
   \label{fig:breakdown-high} 
     \end{subfigure}
          \caption{Average JCT improvement breakdown.} 
   \label{fig:breakdown} 
\end{figure}

\begin{table}[t!]
\centering
\begin{tabular}
{ 
@ {}
>{\columncolor[HTML]{FFFFFF}}c@{\hskip 0.08in} |
>{\columncolor[HTML]{FFFFFF}}r@{\hskip 0.1in}
>{\columncolor[HTML]{EFEFEF}}r@{\hskip 0.1in}
>{\columncolor[HTML]{FFFFFF}}r@{\hskip 0.1in}}
\Xhline{3\arrayrulewidth}
%& \multicolumn{3}{c|}{\cellcolor[HTML]{FFFFFF}{\color[HTML]{000000} Job Demand}} \\ \hline
& \cellcolor[HTML]{FFFFFF}{\color[HTML]{000000} \textbf{25th}} & \cellcolor[HTML]{FFFFFF}{\color[HTML]{000000} \textbf{50th}} & \cellcolor[HTML]{FFFFFF} {\color[HTML]{000000} \textbf{75th}}\\ \hline
{\color[HTML]{000000} \textbf{Even}} & {\color[HTML]{000000} 11.5$\times$} & {\color[HTML]{000000} 7.2$\times$} & {\color[HTML]{000000} 5.6$\times$} \\ \cline{1-1}
{\color[HTML]{000000} \textbf{Small}} & {\color[HTML]{000000} 6.8$\times$} & {\color[HTML]{000000} 5.2$\times$} & {\color[HTML]{000000} 4.3$\times$} \\ \cline{1-1}
{\color[HTML]{000000} \textbf{Large}} & {\color[HTML]{000000} 3.7$\times$} & {\color[HTML]{000000} 2.9$\times$} & {\color[HTML]{000000} 2.7$\times$} \\ \cline{1-1}
{\color[HTML]{000000} \textbf{Low}} & {\color[HTML]{000000} 11.6$\times$} & {\color[HTML]{000000} 7.5$\times$} & {\color[HTML]{000000} 4.7$\times$} \\ \cline{1-1}
{\color[HTML]{000000} \textbf{High}} & {\color[HTML]{000000} 5.1$\times$} & {\color[HTML]{000000} 3.3$\times$} & {\color[HTML]{000000} 3.1$\times$} \\ 
\Xhline{3\arrayrulewidth}
\end{tabular}
\caption{Breakdown of average JCT improvement across jobs with lowest 25\%, 50\%, and 75\% of total demands.
Venn benefits more on smaller jobs.}
\label{tbl:job-breakdown-part1}
\end{table}

\begin{table}[t!]
\centering
\begin{tabular}
{ 
@ {}
>{\columncolor[HTML]{FFFFFF}}c@{\hskip 0.08in} |
>{\columncolor[HTML]{FFFFFF}}c@{\hskip 0.1in}
>{\columncolor[HTML]{EFEFEF}}c@{\hskip 0.1in}
>{\columncolor[HTML]{FFFFFF}}c@{\hskip 0.1in}
>{\columncolor[HTML]{EFEFEF}}c@{\hskip 0.1in} }
\Xhline{3\arrayrulewidth}
%\multicolumn{1}{l|}{\cellcolor[HTML]{FFFFFF}{\color[HTML]{000000} }} & \multicolumn{4}{c}{\cellcolor[HTML]{FFFFFF}{\color[HTML]{000000}  Res. Requirement Type}}                                                                                                                                                                                                                                           \\ \hline
\multicolumn{1}{l|}{\cellcolor[HTML]{FFFFFF}{\color[HTML]{000000} }} & \multicolumn{1}{c}{\cellcolor[HTML]{FFFFFF}{\color[HTML]{000000} \textbf{General.}}} & \multicolumn{1}{c}{\cellcolor[HTML]{FFFFFF}{\color[HTML]{000000} \textbf{Compute.}}} & \multicolumn{1}{c}{\cellcolor[HTML]{FFFFFF}{\color[HTML]{000000} \textbf{Memory.}}} & \multicolumn{1}{c}{\cellcolor[HTML]{FFFFFF}{\color[HTML]{000000} \textbf{High-perf.}}} \\ \hline
{\color[HTML]{000000} \textbf{Even}}                                 & {\color[HTML]{000000} 1.5$\times$}                                                 & {\color[HTML]{000000} 7.2$\times$}                                                 & {\color[HTML]{000000} 5.3$\times$}                                                 & {\color[HTML]{000000} 3.9$\times$}                                                 \\ \cline{1-1}
{\color[HTML]{000000} \textbf{Small}}                                & {\color[HTML]{000000} 0.9$\times$}                                                 & {\color[HTML]{000000} 6.0$\times$}                                                 & {\color[HTML]{000000} 2.8$\times$}                                                 & {\color[HTML]{000000} 2.6$\times$}                                                 \\ \cline{1-1}
{\color[HTML]{000000} \textbf{Large}}                                & {\color[HTML]{000000} 0.9$\times$}                                                 & {\color[HTML]{000000} 3.7$\times$}                                                 & {\color[HTML]{000000} 1.8$\times$}                                                 & {\color[HTML]{000000} 2.6$\times$}                                                 \\ \cline{1-1}
{\color[HTML]{000000} \textbf{Low}}                                  & {\color[HTML]{000000} 0.8$\times$}                                                 & {\color[HTML]{000000} 3.4$\times$}                                                 & {\color[HTML]{000000} 2.1$\times$}                                                 & {\color[HTML]{000000} 8.7$\times$}                                                 \\ \cline{1-1}
{\color[HTML]{000000} \textbf{High}}                                 & {\color[HTML]{000000} 0.8$\times$}                                                 & {\color[HTML]{000000} 2.2$\times$}                                                 & {\color[HTML]{000000} 2.2$\times$}                                                 & {\color[HTML]{000000} 5.6$\times$}                                                 \\ 
\Xhline{3\arrayrulewidth}
\end{tabular}
\caption{Breakdown of average JCT improvement across jobs that ask for General resources, Compute-rich resources, Memory-rich resources and High-performance resources. 
\name benefits more on jobs that ask for scarcer resources.}
\label{tbl:job-breakdown-part2} 
\end{table}

\begin{table}[t]
  \small
  \centering
  \begin{tabular}{lccv}
  %\hline
  \Xhline{3\arrayrulewidth}
  %\rowcolor[HTML]%{ECF4FF} 
        & \textbf{FIFO}  & \textbf{SRSF}  & \textbf{\name}            \\ \hline
  \textbf{General}  & 1.46$\times$ & 1.78$\times$ & 1.94$\times$ \\ %\hline
  \textbf{Compute-heavy} & 1.73$\times$ & 2.08$\times$ & 2.23$\times$ \\ %\hline
  \textbf{Memory-heavy} & 1.68$\times$ & 2.05$\times$ & 2.27$\times$ \\ %\hline
  \textbf{Resource-heavy}  & 1.65$\times$ & 1.90$\times$ & 2.01$\times$ \\ \Xhline{3\arrayrulewidth}
  \end{tabular} 
  \caption{Average JCT improvement on four biased workloads. }
  % and breakdown performance for jobs with different resource requirements. }
  %Venn benefits more on workloads with biased resource requirements. 
  \label{tbl:bias} 
  \end{table}

We present a performance breakdown of \name, which consists of two parts: a job scheduling algorithm that determines the job order to minimize the scheduling delay, and a device-to-job matching algorithm to reduce the response collection time.   
We evaluate the performance of \name with only the scheduling algorithm (\name w/o matching), \name with only matching algorithm and FIFO (\name w/o scheduling), and \name with both algorithms (\name). 
We show the improvement of the average JCT over the default random scheduling for each component. 
As shown in Figure~\ref{fig:breakdown}, the tier-based device-to-job matching algorithm primarily benefits low workload where the resource contention is small, which is aligned with our original design intention.
The reason is that when resource supply is sufficient, the response collection time would dominate the JCT, which can be optimized by our matching algorithm.

To analyze the impact of \name on different types of jobs, we break down jobs by their total demands and device requirements (Figure~\ref{fig:elig-trace}), and then analyze the average JCT improvement for each type. 
Table~\ref{tbl:job-breakdown-part1} and Table~\ref{tbl:job-breakdown-part2} quantifies how \name improves average JCT across varying total demands (25th, 50th, 75th percentiles) and eligibility types (General, Compute-rich, Memory rich, High-performance ). 
Notably, jobs with smaller total demands and scarcer resources benefit the most from \name.

\subsection{Case Study on Biased Workload}
\label{sec:biased}
This section delves into an in-depth analysis of \name's adaptability and performance across four distinct workloads, each characterized by a specific bias in job resource requirements. 
These workloads include General, Compute-Heavy, Memory-Heavy, and Resource-Heavy categories. 
For example, the Compute-Heavy workload is structured such that half of its jobs are predominantly geared towards compute-intensive resources, with the rest evenly distributed across the other three resource types. 
This setup introduces varied queue lengths in different job groups, providing a robust testbed for evaluating \name's capability in effectively managing these variations.

The design of these workloads aims to scrutinize \name's proficiency in navigating diverse resource requirement distributions, while maintaining uniformity in job demands as illustrated in Figure~\ref{fig:job-trace}. The outcomes of these experiments are systematically presented in Table~\ref{tbl:bias}, offering insights into the algorithm's performance under each workload scenario.

\subsection{Ablation Study}
\label{sec:ablation}

\begin{figure}[t!]
   \centering 
     \begin{minipage}{1.5in}
      \includegraphics[trim=0 0 0 0,clip,scale=0.3]{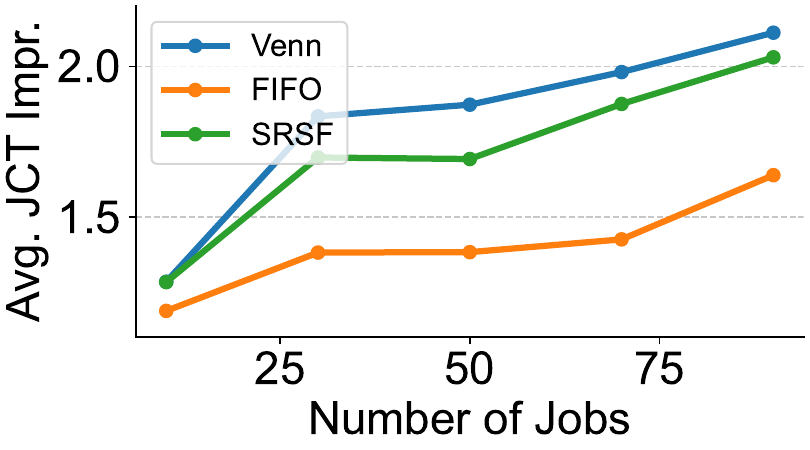} 
     \caption{\name outperforms FIFO and SRSF across different numbers of jobs.} 
   \label{fig:num-job} 
        \end{minipage} 
      \hfill
     \begin{minipage}{1.5in}
      \includegraphics[trim=0 0 0 0,clip,scale=0.3]{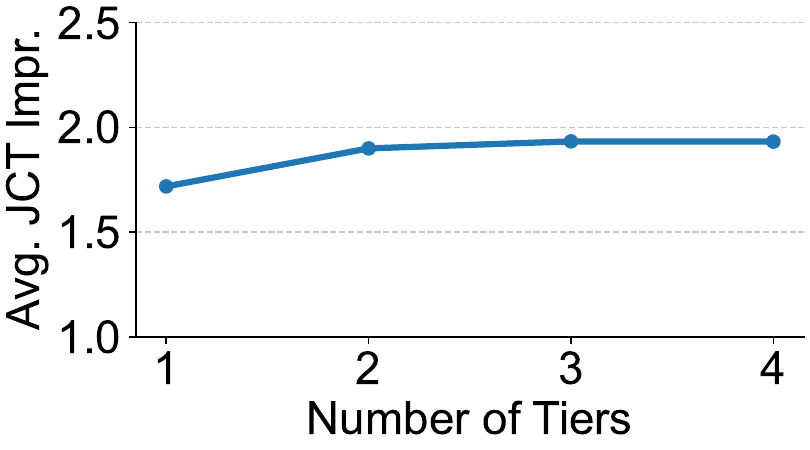} 
     \caption{\name's improvement across different numbers of tiers.} 
   \label{fig:num-tier} 
        \end{minipage} 
\end{figure}
  
\paragraph{Impact of number of jobs.}
We evaluate \name with different numbers of \fl jobs arriving over time. 
As the number of jobs increases, resource contention becomes more pronounced, highlighting the importance of efficient scheduling under such conditions. 
We present the average JCT speed-up with different numbers of jobs in even workload to demonstrate the effectiveness of our algorithm. 
As shown in Figure~\ref{fig:num-job}, \name consistently provides improvement across various numbers of jobs, with its benefits becoming more pronounced as the number of jobs increases.

\paragraph{Impact of number of tiers.}
We evaluate the matching algorithm's performance across varying numbers of client tiers, ranging from a single tier to multiple. 
Figure~\ref{fig:num-tier} shows that increased tier granularity enhances device-to-job matching and improves performance. However, the gains plateau beyond a certain point, as finer tiers increase scheduling delay without yielding further reductions in response collection time. Thus, optimizing the number of tiers is crucial for balancing scheduling efficiency and JCT improvement.

\begin{figure}[t!]
   \centering 
    \begin{subfigure}[t]{0.23\textwidth} 
      \includegraphics[trim=0 0 0 0,clip,scale=0.3]{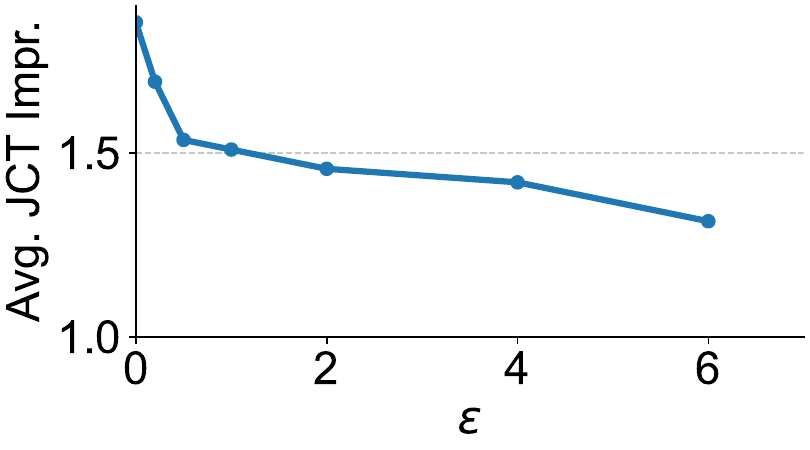} 
     \caption{\name's improvement over different $\epsilon$.} 

\label{fig:fairness} 
     \end{subfigure}
         \begin{subfigure}[t]{0.23\textwidth} 
      \includegraphics[trim=0 0 0 0,clip,scale=0.3]{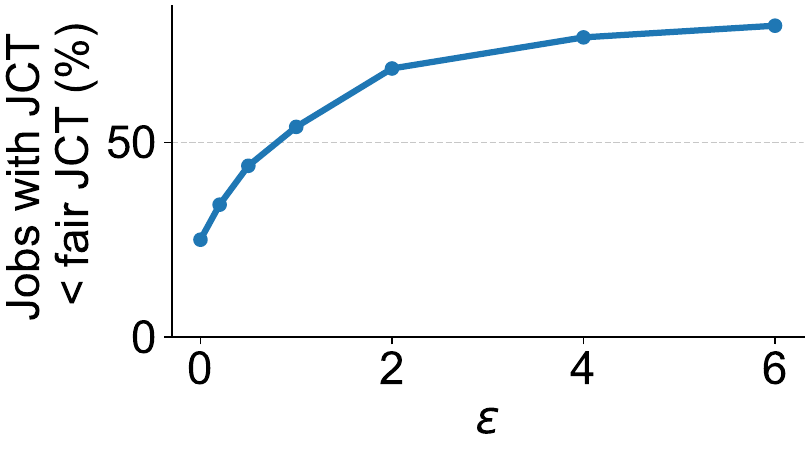} 

     \caption{Ratio of jobs meet the fair-share JCT.} 
  \label{fig:ddl} 
     \end{subfigure}

          \caption{Fairness knob.} 
\end{figure}

\paragraph{Impact of fairness knob.}
We incorporated a fairness knob ($\epsilon$) to strike a balance between performance and fairness. 
We tune the value of $\epsilon$ and report the average JCT speed-up against these values in Figure~\ref{fig:fairness}. The results demonstrate that, as $\epsilon$ increases, the JCT speed-up tends to decrease. 
As shown in Figure~\ref{fig:ddl}, the percentage of jobs that meet the fair-share JCT increases with the $\epsilon$, where  $\epsilon=$2 gives 69\% of jobs receive their fair-share JCT.
This observation highlights the trade-off between performance and fairness within our \fl resource scheduling algorithm, which can be fine-tuned by adjusting the value of $\epsilon$.

\section{Related Work}

\textbf{Cluster resource managers.} 
There are many cluster resource managers that schedule resources with constraints~\cite{phoenix, choosy, gavel}.
Existing ML cluster resource managers mainly focus on managing the stable data-center resources like GPU and CPU~\citep{gandiva, synergy} in order to improve JCT, utilization and fairness~\citep{optimus, gandiva-fair}. 
Some research delves into GPU-specific optimizations~\cite{tiresias, fluid, afs}, while others co-design resource managers with the specific characteristics of ML workloads~\cite{pollux, zeus}. 
However, they are mostly designed for data center and fail to capture the level of availability and heterogeneity of \fl resources, nor do they consider both scheduling delay and response collection time of \fl jobs.

% Moreover, we want to note that \name is complementary to these in-cluster schedulers. For example, when the number of rounds of \fl jobs is not available, we can leverage Least-Attained Service~\cite{tiresias} to performed \name scheduling within each Tiresias-determined job queue. 
% However, \fl jobs can have different number of rounds, which is not known a priori.
% \name is designed to seamlessly integrate algorithms like Least-Attained Service~\cite{tiresias} to perform job-level scheduling without knowing the number of rounds each \fl job may submit

\textbf{\fl client selectors.} 
Several recent works have studied client selection at the single \fl job level. 
In addition to enforce device requirements including software version, hardware capacity and data quality, they further cherry-pick clients based on their state, system and statistical utility~\citep{oort,relay,divfl,power-of-choice,tifl,gluefl} to speed up the training. 
However, they solely focus on response collection time~\citep{multifl,auxo} and overlook the time required to acquire adequate resources. 
Additionally, optimizing individual \fl job performance is insufficient as the deployment scale of \fl applications continues to grow.

\textbf{\fl resource managers.} 
Large companies including Apple, Meta and Google have proposed their \fl infrastructures; however, \fl resource management is not their primary focus. 
These resource managers simply adopt random device-to-job matching in various forms, resulting in suboptimal scheduling delays and response collection times.

%Additionally, their resource managers predominantly focus on serving their own \fl jobs, featuring single request and aggregation strategies. 

\section{Conclusion}

In this paper, we present \name, our \fl resource manager designed to efficiently share large-scale heterogeneous device resources among multiple \fl jobs with diverse requirements.
\name features a contention-aware job scheduling algorithm and a resource-aware device-to-job matching algorithm, aiming to minimize the average job completion time (JCT) for \fl workloads.
Our evaluation using a range of real-world \fl workloads demonstrates that \name improves average JCT by up to $1.88\times$.

\section*{Acknowledgements}
We thank the anonymous reviewers for valuable feedback.
This work was supported in part by NSF grant CNS-2106184 and a grant from Cisco.

\label{EndOfPaper}

\bibliography{venn}
\bibliographystyle{lib/mlsys2025}

\clearpage
\appendix

\section{Detailed \name Responsibility}
\label{sec:app-resp}

\name delegates responsibilities such as device selection, device fault tolerance, and privacy protection to individual \fl jobs. 
Device failures are both inevitable and difficult to predict in \fl. 
Rather than imposing a one-size-fits-all solution, \name empowers \fl jobs to take the reins on fault tolerance based on their specific workloads and objectives.
Therefore, \name offloads handling device fault tolerance to \fl jobs, who can better detect and react to device failures (\eg, deciding the amount of overcommit \cite{google}).
Similarly, \name offers \fl jobs the freedom to design their own device selectors~\cite{oort}, where they can incorporate customized resource criteria into their requests. 
\name also does not interfere with job-specific privacy solutions such as secure aggregation~\cite{sec-aggr,meta} or differential privacy~\cite{fl-dp,gboard-fl}.

 \section{ILP Formulation of IRS}
 \label{sec:ilp}

We now formulate the IRS that allocates resources to jobs under the constraints with the objective of minimizing the average scheduling delay.
Assume we have devices $\mathbb{S} = \{s_1, s_2, \allowbreak ..., s_q\}$ continuously arriving at times $\{t_i, t_2, \allowbreak ..., t_q\}$.
There are $m$ jobs $\mathbb{J}= \{J_1,J_2, \allowbreak ..., J_{m} \}$ with their resource demands $\mathbb{D}= \{D_1,D_2, \allowbreak ..., D_{m} \}$.
Let $e_{ij}$ be a binary variable in the eligibility matrix, which is set to 1 if device $i$ is eligible to job $j$, and 0 otherwise. 
Let $x_{ij}$ be a binary variable of resource allocation, which is 1 if device $i$ is assigned to job $j$, and 0 otherwise. 

We have to follow these constraints during the resource allocation:
$$
\sum_{j=1}^m x_{ij} \leq 1, \forall i \in [1, q]
$$
$$
\sum_{j=1}^m x_{ij} \times e_{ij} \leq 1, \forall i \in [1, q]
$$
$$
\sum_{i=0}^q x_{ij} = D_j, \forall j \in [1, m]
$$

Therefore, the scheduling delay of each job is determined by the time it acquires the last needed device, \ie, $T_j = \max_i(x_{ij} \times t_i )$ under these constraints. 
The overall objective can then be expressed as:
$$
\min \frac{\sum_{j=1}^{m} T_j}{m}
$$

 \section{Theoretical Insight to the Heuristic of IRS}
 \label{sec:irs-lm}

\newtheorem{lemma}{Lemma}
\begin{lemma}
Given a diverse set of \fl jobs with one round request, if jobs are scheduled optimally in terms of the average JCT, first within each job group and then across job groups, the resulting average JCT is optimal. 
\label{lm:two-step}
\end{lemma}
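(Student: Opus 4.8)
The plan is to show that the joint minimization over all feasible schedules factorizes into an inner (within-group) minimization whose optimizer does not depend on the outer (across-group) decision, so that solving the two stages sequentially loses nothing. First I would fix the right abstraction: a feasible schedule is described by two layers. A \emph{cross-group allocation} $\alpha$ routes each arriving device to one of the groups whose eligibility set contains it, and, for each group $G_j$, an \emph{intra-group ordering} $\sigma_j$ distributes the device stream that $\alpha$ hands to $G_j$ among the jobs of $G_j$. Because every job inside a group shares the identical eligibility set $S_j$, the arrivals routed to a group are fully interchangeable across its jobs, so $\sigma_j$ only controls which $D_i$ of the group's arrivals each job $J_i$ receives, and (consistent with the ILP of Appendix~\ref{sec:ilp}) a job's scheduling delay equals the arrival time of its last received device.

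Next I would prove the key intra-group claim: for any fixed nondecreasing arrival stream that a group receives, ordering its jobs by ascending demand minimizes the group's total scheduling delay, and the minimizer is the \emph{same} for every arrival stream. I would use a prefix-sum exchange argument. Under any completion order $\rho$ of a group's jobs, the $\ell$ jobs that finish first collectively consume $P_\ell = \sum_{i\le\ell} D_{\rho(i)}$ distinct devices, so the $\ell$-th completion cannot occur before the $P_\ell$-th arrival $\tau_{P_\ell}$; this bound is achievable by packing devices greedily in completion order. Hence the group cost equals $\sum_\ell \tau_{P_\ell}$, where $\tau$ is the nondecreasing arrival sequence and $P_\ell$ is the $\ell$-th prefix sum of the demands under $\rho$. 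Since ascending order makes the first $\ell$ demands the $\ell$ smallest, it minimizes every prefix sum $P_\ell$ simultaneously, and monotonicity of $\tau$ then minimizes each term $\tau_{P_\ell}$ independently of the actual values of $\tau$.

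This uniform optimality is the crux, and the step I expect to be the main obstacle, since it is precisely what decouples the two levels: in general the cross-group and within-group decisions are coupled because the stream a group receives depends on $\alpha$, and the value of invariance is that the optimal $\sigma_j$ survives unchanged across all $\alpha$. With it in hand, I would assemble the decomposition. For a fixed $\alpha$ the per-group streams are determined and disjoint, so the global average JCT is the sum of the per-group costs, each minimized independently by the ascending-demand order; therefore $\mathrm{OPT} = \min_\alpha \sum_j C_j^{\mathrm{SPT}}(\alpha)$. The two-step procedure commits to the ascending order inside every group first (admissible exactly because that order is optimal for \emph{all} $\alpha$) and then optimizes the cross-group allocation, thereby evaluating $\min_\alpha \sum_j C_j^{\mathrm{SPT}}(\alpha)$ and attaining the global optimum.

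Finally I would tie up the boundary conditions: I would state the feasibility assumption that every job is eventually satisfiable (otherwise delays are infinite and the statement is vacuous), and I would verify that devices eligible for several groups are resolved entirely inside the $\alpha$-layer, so that the grouping itself sacrifices no optimality. I would also remark that the argument treats the one-round request per job as a single static demand, which is what makes the prefix-sum exchange valid and is the regime the lemma assumes.
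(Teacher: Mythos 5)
Your proposal is correct and lands on the same essential strategy as the paper --- an exchange argument showing that ascending-demand order within each group must be part of any global optimum, with the cross-group decision layered on top --- but your execution is noticeably more rigorous. The paper argues by contradiction: it posits a global optimum $S$ violating the within-group order, swaps two out-of-order jobs $J_A, J_B$ of the same group to get a cheaper schedule $S'$, and explicitly \emph{skips} the proof that smallest-demand-first is locally optimal by citing prior work on SRSF-style scheduling. You instead prove the within-group claim constructively via the prefix-sum lower bound ($\ell$-th completion no earlier than the $P_\ell$-th arrival, achieved by greedy packing), and --- crucially --- you make explicit the invariance property that the optimal intra-group order is the same for \emph{every} arrival stream the group might receive under any cross-group allocation $\alpha$. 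That invariance is precisely what licenses committing to the within-group order before optimizing across groups, and the paper leaves it implicit (its swap argument works only because jobs in a group are interchangeable with respect to eligibility, so the swap does not perturb other groups --- a point it does not state). What your approach buys is a genuinely complete decomposition $\mathrm{OPT}=\min_\alpha\sum_j C_j^{\mathrm{SPT}}(\alpha)$ and an explicit feasibility caveat; what the paper's buys is brevity. One minor point: in both arguments the swap or reordering only \emph{weakly} decreases completion times when several devices share an arrival time, so the strict inequality in the paper's contradiction (and implicitly in your term-by-term minimization) should be read as ``no worse, and the ascending order is always among the optima,'' which is all the lemma needs.
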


\begin{proof}

Let us assume there is an optimal scheduling algorithm that optimizes the average JCT within each group, and there is an optimal scheduling algorithm which decides how to merge the job order across job groups to minimize the average JCT.
Since the second step is assumed to provide optimal average JCT based on the previous within group job order,
we only need to prove the global optimal schedule follows the order generated by the within job group step.

\name employs smallest remaining job demand first algorithm within each job group.
Since prioritizing jobs with smaller remaining resource demands has been shown to be effective in similar scheduling problems~\cite{srsf}, we skip the proof that the scheduling algorithm within job group gives the local optimal average JCT for each group.

We prove the rest by contradiction. 
Assume that there exists an optimal schedule $S$ that does not follow the order given by each job group.
In this assumed optimal schedule $S$, let us say there are two jobs $J_A$  and $J_B$ in the same group such that $J_A$ comes after $J_B$, but $J_A$ has fewer resource requirements than $J_B$.
Let us swap $J_A$ and $J_B$ to create a new schedule $S'$. 
Since $J_A$ has fewer resource demand, the average JCT of $S'$ will be less than that in $S$.
This contradicts our original assumption that $S$ is an optimal schedule, as we've found a schedule $S'$ with a lower average JCT.
Therefore, the assumption is false, and the order given by each job group (sorted by resource demands) must be part of the optimal schedule.
If we have an optimal scheduling across job groups, the overall average JCT will be optimal.

\end{proof}

 \section{Effectiveness of \name Scheduling Heuristic}
 \label{sec:proof}
 
To illustrate the effectiveness of \name's approach, 
we start with proving Lemma~\ref{lemma:two-groups}, which considers a simplified case involving only two job groups with arbitrary resource contention patterns.
 Through mathematical proof, we can demonstrate that our algorithm achieves the optimal solution under this setting.

\begin{lemma}
Given two job groups with arbitrary resource contention patterns, the scheduling plan generated by \name as in Algorithm~\ref{algo:amg-sched} is capable of minimizing the average scheduling delay, if a future resource allocation plan is set.

\label{lemma:two-groups}
\end{lemma}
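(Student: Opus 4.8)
The plan is to prove optimality of Algorithm~\ref{algo:amg-sched} for two job groups by reducing the problem to a clean combinatorial decision. Let the two groups be $G_a$ and $G_b$ with eligible device sets $S_a$ and $S_b$, where WLOG $|S_a| \le |S_b|$ (so $G_a$ holds the scarcer resource). The only devices whose assignment is in question are those in the intersection $S_a \cap S_b$; devices in $S_a \setminus S_b$ must go to $G_a$ and devices in $S_b \setminus S_a$ must go to $G_b$, since each is eligible for only one group. Thus the entire scheduling freedom collapses to deciding how the shared pool $S_a \cap S_b$ is split between the two groups. First I would formalize that any feasible allocation is parameterized by this single split, and that within each group the jobs are already ordered by smallest-remaining-demand-first (justified locally by Lemma~\ref{lm:two-step} and \S\ref{sec:within}), so the per-group scheduling delay is a deterministic, monotone function of the amount of resource the group receives.

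Next I would write the average scheduling delay as a function of the split. Because devices arrive at a (given) future rate and each group drains its queue in demand order, giving a group more of the shared resource monotonically decreases the completion times of the jobs in that group while increasing them for the other group. The key quantity is the marginal trade-off: shifting one unit of shared resource from $G_b$ to $G_a$ changes the total JCT by an amount governed by how many jobs in each group are affected relative to how much resource each already holds. This is exactly the ratio $\frac{m_j'}{|S_j'|}$ appearing in line~\ref{code:condition} of the algorithm. I would show that the total-delay objective, as a function of the number of shared devices assigned to $G_a$, is convex (or at least unimodal), so that comparing these two marginal ratios suffices to locate the optimum: one keeps transferring intersected resource toward the group with the larger $\frac{m'}{|S'|}$ ratio until the inequality in line~\ref{code:condition} reverses, at which point line~\ref{code:break} halts — and this stopping point coincides with the global minimizer.

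Concretely, the key steps in order are: (i) reduce the feasible set to the single parameter describing the split of $S_a \cap S_b$; (ii) express average scheduling delay as a piecewise-linear function of this parameter, using the arrival rate and the smallest-demand-first intra-group order; (iii) establish convexity/unimodality of this objective in the split parameter; (iv) identify the $\frac{m_j'}{|S_j'|}$ versus $\frac{m_k'}{|S_k|}$ comparison as precisely the discrete derivative condition characterizing the minimizer; and (v) verify that the greedy transfer loop in Algorithm~\ref{algo:amg-sched}, initialized by giving all contested resource to the scarcer group (lines~\ref{code:init-start}--\ref{code:init-end}) and then reallocating toward the group with the dominating ratio, terminates at exactly that minimizer. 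The phrase ``if a future resource allocation plan is set'' in the statement is what makes step (ii) well-defined: it fixes the arrival schedule so that each group's completion times are a determinate function of its resource share.

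The main obstacle I anticipate is step (iii), establishing convexity of the delay objective in the split parameter. The delay of each group is the arrival time of its last-needed device, and as the shared pool is reallocated, the \emph{set} of jobs whose completion is pushed back changes in a combinatorial way; the marginal cost of moving one more device is therefore a step function, and one must argue this step function is monotone (non-decreasing marginal cost of over-allocating to one group) so that the single ratio comparison is not fooled by a non-convex landscape. I would handle this by showing that the affected-queue-length $m'$ returned by \texttt{get-queue-len}() is itself monotone in the current allocation, which makes the marginal-delay-per-device ratios monotone and hence the greedy stopping rule exact. Once convexity is secured, the equivalence between the algorithm's halting condition and the true optimizer is essentially a first-order-optimality argument and should follow without heavy computation.
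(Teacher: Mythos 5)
Your proposal is built around the same central calculation as the paper's proof, but it wraps that calculation in a considerably more ambitious (and more rigorous) optimization framework. The paper's argument is purely local: after sorting within groups and giving the contested devices to the scarcer group, it considers prioritizing only the \emph{first} job (of demand $l$) of the abundant group, computes the resulting change in total delay, $\Delta t = l\,m_B' - \bigl(\tfrac{l}{1-x}-l\bigr)m_A'$, and observes that $\Delta t<0$ is equivalent to the ratio test $\tfrac{m_A'}{1-x} > \tfrac{m_B'}{x}$, i.e.\ the condition in line~\ref{code:condition} of Algorithm~\ref{algo:amg-sched}. That is exactly your step (iv), and your identification of $\tfrac{m_j'}{|S_j'|}$ as a marginal delay-per-unit-resource is the right reading of it. What you add --- the reduction of the feasible set to a single split parameter for $S_a\cap S_b$ (your step (i) is a clean observation the paper never states explicitly), the piecewise-linear delay function, the convexity/unimodality claim, and the argument that the greedy transfer loop halts at the global minimizer --- is all material the paper does \emph{not} contain: its proof stops at showing that one exchange is beneficial iff the ratio test holds, and simply asserts that this is the ``prototype'' of the scheduling decision. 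So your route, if completed, would prove strictly more than the paper does.

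The one genuine gap is the one you yourself flag: step (iii). Neither you nor the paper establishes that the total delay is convex (or even unimodal) in the split parameter, and this is not a formality --- the affected queue length $m'$ returned by \texttt{get-queue-len}() changes combinatorially as resource is transferred, and without monotonicity of the marginal ratios the greedy stopping rule could halt at a local rather than global minimum. Your proposed fix (show $m'$ is monotone in the current allocation) is the right target, but note that under the smallest-demand-first intra-group order the per-job savings $\tfrac{l}{1-x}-l$ also depends on $l$, which varies job by job, so the marginal cost is not obviously a monotone step function of the amount transferred; you would need to argue monotonicity of the \emph{ratio} jointly in $m'$ and $|S'|$ as both are updated in lines~\ref{code:update-start}--\ref{code:update-end}. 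Until that is done, your proof establishes the same thing the paper's does --- local optimality of each individual transfer --- rather than the global claim in the lemma statement.
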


To better prove the Lemma, we introduce a new representation of the scheduling problem in a more scalable way.
Firstly, as depicted in Figure~\ref{fig:step-setup}, we represent the two job groups by two distinct sets of squares, where the area of each square corresponds to the size of the request demand for that job.

Secondly, to visualize the temporal dynamics of resource allocation, we refer to Figure~\ref{fig:amg-step2}. 
For the sake of this example, let's assume a constant inflow of 100 devices per time unit. 
Within this set, `x' devices possess memory  $\geq$ 2GB, while all 100 devices have memory  $\geq$ to 1GB. 
The y-axis is partitioned into two segments: the 0 to `x' range signifies devices with memory exceeding 2GB, and the `x' to 100 range represents devices with memory ranging between 1GB and 2GB.

Resource allocation over time is illustrated using rectangles, each indicating the job request to which devices are assigned. 
For instance, in the right subfigure of Figure~\ref{fig:amg-step2}, devices in the 0 to `x' memory range are allocated to job group B at time 0, while those in the `x' to 100 range are allocated to job group A. 
This representation allows us to dynamically track resource allocation across different jobs over time.

\begin{proof}

As shown in Figure~\ref{fig:step-setup}, there are $m_A$ requests that ask for devices with 1GB memory and $m_B$ jobs that request devices with 2GB memory, resulting in two job groups $A$ and $B$.  
The devices constantly check-in and execute one \fl task, where 100\% devices with memory size $\geq$ 1GB and x\% of the devices have memory size $\geq$ 2GB. 
Note that, the proof is not limited to the contention pattern draw in Figure~\ref{fig:step-setup}, it can be generalized to job group with intersected resource contention and give the same conclusion.

Based on Algorithm~\ref{algo:amg-sched},
the first step is to sort these jobs within each job group by job size in ascending order (Figure~\ref{fig:amg-step1}). 
In the second step, we generate an initial resource allocation for each job group by focusing on the job group with the scarcest resources.
This results in an initial allocation plan that avoids resource sharing across job groups, setting the stage for subsequent cross-group allocations.

Based on the group-level initial allocation plan (left subfigure in Figure~\ref{fig:amg-step2}), we need to determine the job order across groups, that is, to decide whether to prioritize jobs from Group $A$ over Group $B$ (right subfigure in Figure~\ref{fig:amg-step2}) at current time in order to achieve a smaller average scheduling delay.
In this case, we focus on determining the order of the first job with size $l$ in Group A and calculate the queuing delay difference ($\Delta t$) if we prioritize the first job from Group $A$ over Group $B$.
$$
\Delta t = l * m_B' - (\frac{l}{1-x}-l) * m_A'
$$
where $m_A'$, $m_B'$ represents the number of remaining jobs whose queuing delay may be affected by this prioritization. 
Since the future resource allocation is set by the previous initial allocation or assumed to be given, $m_A'$, $m_B'$ are feasible to get.
We prioritize the first job from Group A only if $\Delta t < 0 $, which gives $\frac{m_A'}{1-x} > \frac{m_B'}{x}$, otherwise we stick with the original plan. 
$\Delta t < 0 $ is actually the prototype of the scheduling decision as in Algorithm~\ref{algo:amg-sched} line~\ref{code:condition}.

\end{proof}

\begin{figure}[h!]
  \centering
     \begin{subfigure}[t]{0.5\textwidth}
  \centering
    \includegraphics[clip, trim=0 165 20 0,  scale=1]{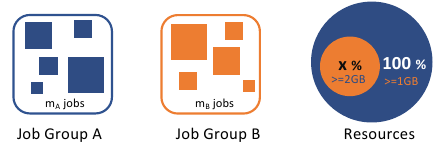} 
       \caption{Resource supply and demand. There are 100\% devices with memory size $\geq$ 1GB and x\% of the devices have memory size $\geq$ 2GB. 
       There are two job groups where Group A with $m_A$ jobs requests for devices with memory  size $\geq$ 1 GB and Group B with $m_B$ jobs requests for devices with memory  size $\geq$ 2 GB.
       }
     \label{fig:step-setup}
  \end{subfigure}
     \begin{subfigure}[t]{0.5\textwidth}
   \centering 
    \includegraphics[clip, trim=0 200 0 0,  scale=1]{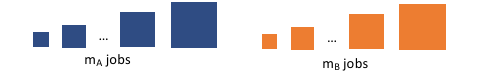} 
    \caption{ Step 1: Sort within Job Group.   }
   \label{fig:amg-step1}
     \end{subfigure}
      \begin{subfigure}[t]{0.5\textwidth}
   \centering 
    \includegraphics[clip, trim=0 166 0 0,  scale=1]{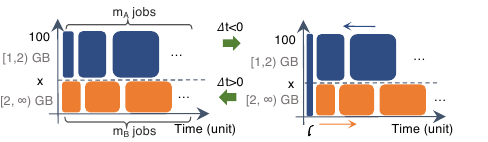} 
    \caption{ Step 2: Schedule across Job Groups.   }
   \label{fig:amg-step2}
     \end{subfigure}
    \caption{\name scheduling algorithm.}
   \label{fig:algo}
\end{figure}

By leveraging the conclusion of Lemma~\ref{lemma:two-groups}, \name can further generalize to the scenario with more than two job groups with arbitrary resource contention patterns.  
Specifically,  \name greedily compares each pair of job groups $(G_j, G_{k})$ following the order.
 For each pair, \name applies the logic proven in Lemma~\ref{lemma:two-groups} to minimize the average scheduling delay between $G_j$ and $G_{k}$.

%\section{Discussions}
%\label{sec:clairvoyant}
%We consider a performance upper bound by comparing with a clairvoyant version of SRSF and \name, both of which are privy to the total number of rounds for each \fl jobs. 
%As shown in Figure~\ref{fig:upper}, the average JCT improvement is notably more substantial—up to a factor of 2.11—when the total number of rounds is known a priori. 
%Importantly, \name demonstrates performance that closely approximates this upper-bound improvement compared with clairvoyant SRSF.
%Furthermore, the clairvoyant version of \name exhibits additional gains in reducing the average JCT.
%
%
%\begin{figure}[]
%   \centering 
%    \begin{subfigure}[t]{0.23\textwidth} 
%      \includegraphics[trim=0 0 0 0,clip,scale=0.3]{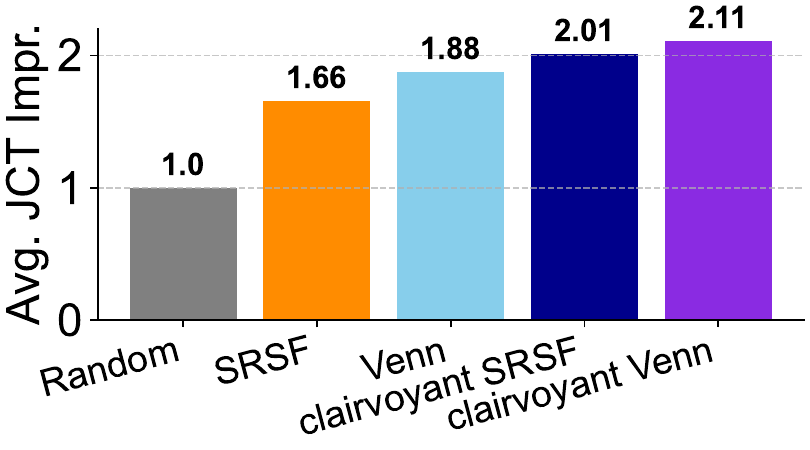} 
%     \caption{Low workload.} 
%   \label{fig:upper-low} 
%     \end{subfigure}
%         \begin{subfigure}[t]{0.23\textwidth} 
%      \includegraphics[trim=0 0 0 0,clip,scale=0.3]{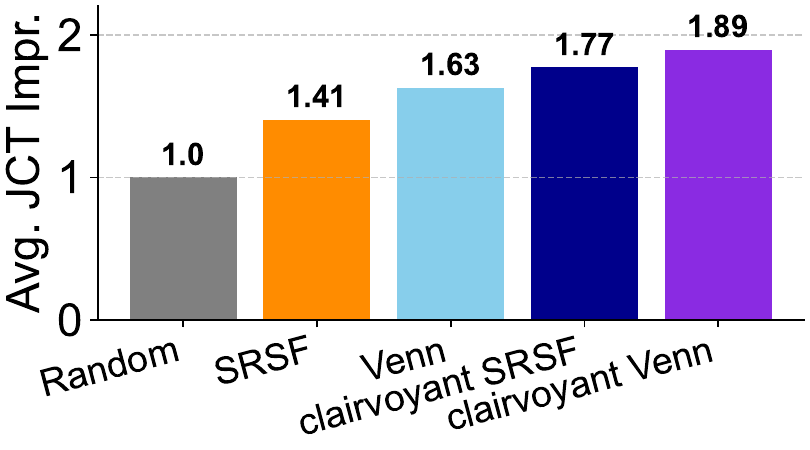} 
%     \caption{High workload.} 
%   \label{fig:upper-high} 
%     \end{subfigure}
%          \caption{Average JCT improvement upper bound. } 
%   \label{fig:upper} 
%\end{figure}

\clearpage

%\appendix

\end{document}